\newtheorem{theorem}{Theorem}
\newtheorem{corollary}{Corollary}
\begin{document}

\title{Star network non-$n$-local correlations can resist consistency noises better}
\author{Kan He}
\email{hekanquantum@163.com}
\affiliation{College of Mathematics, Taiyuan University of Technology, Taiyuan, 030024, China}
%\affiliation {College of Information and Computer Science, Taiyuan University of Technology, Taiyuan, 030024, China}
%\affiliation {College of software, Taiyuan University of Technology, Taiyuan, 030024, China}
\author{Yueran Han}
\affiliation {College of Mathematics, Taiyuan University of Technology, Taiyuan, 030024, China}

\date{\today }

%\thanks{{\it 2002 Mathematical Subject Classification.} 47H20, 47B49, 47A12}
%\thanks{{\it Key words and phrases.}  Gaussian coherence breaking channels, assistant entanglement inputs}
%\thanks{This work supported by National
%Science Foundation of China (11201329, 11171249) and Program for the Outstanding Innovative Teams of Higher Learning Institutions of Shanxi.}

\begin{abstract}
Imperfections from devices can result in the decay or even vanish of non-$n$-local correlations as the number of sources $n$ increases in the polygon and linear quantum networks. Even though so does this phenomenon for the very special kind of noises, consistency noises of a sequence of devices, which means the sequence of devices have the same probability fails to detect. However, non-multi-local correlations in different quantum networks display distinct anti-noise powers. In the paper, we discover that star network quantum non-$n$-local correlations can resist consistency noises best among noncyclic networks ones. We first calculate the noisy  star network non-$n$-local inequality criteria and analyze its persistency conditions theoretically by the violation of these inequalities. Based on these criteria, we find that in a star network,  the persistency number of sources $n$ has the ability to be rid of consistent noises, and approximates to the infinity. Polygon and linear network  non-$n$-local correlations can not meet the requirements. More significantly, we prove mathematically star network non-$n$-local correlations have the stronger anti-consistent-noise power than arbitrary noncyclic network one. Finally, more generally, we introduce the topic of  partially consistent noises and  compare the change pattern of finite persistency of star network non-$n$-local correlations with linear network ones.
\end{abstract}

\pacs{03.67.Mn, 03.65.Ud, 03.67.-a}
\maketitle

%%%%%%%%%%%%%%%%%%%%%%%%%%%%%%%%%%%%%%%%%%%%%%%%%%%%%%%%%%%%%%%%%%%

\section{Introduction}

Quantum nonlocality, as an important resource in quantum information processing, has many applications (\cite{Sca}, \cite{WJD}). It offers
an advantage in communication complexity
problems \cite{CB}, device-independent quantum cryptography
\cite{BHK}, \cite{MPA}, randomness expansion \cite{Pir}, \cite{CK}, and measurementbased
quantum computation \cite{RB}, \cite{RBB}. Quantum nonlocality
can be demonstrated by violation of Bell type inequalities theoretically and experimentally
 \cite{BCP}-\cite{RKM}.

A quantum network refers to multiple parties
connected by quantum sources, and dilates by increasing the new sources \cite{WEH}. The theory
of quantum networks arises from the large-scale
quantum communication  \cite{23, 24}.
Subsequentially, the topic of network quantum correlations is proposed and has been focused on, in particular network quantum nonlocality \cite{25}-\cite{27}. A  number of
inequality criteria for nonlocality in various quantum networks
have been found, such as entanglement-swapping
networks \cite{5, 6}, linear networks \cite{8}, star networks
\cite{9,10}, polygon networks \cite{11}-\cite{13}, tree-shaped networks
\cite{14}-\cite{16} and other networks \cite{17}.

Sequentially, a challenging task is discover the difference of non-multi-local correlations between different-structure quantum networks.  As we know, nonlocal correlations in distinct kinds of quantum networks have different definitions and inequality criteria. However, it is not clear that how different they are and how the network structures result in the differences. Or more specially, one may want to ask, if there are  a linear and star network with the same 6 two qubit state as their sources, what is the difference between the two kinds of non-6-local correlations? In the paper, we want to  answer the above questions from  the perspective of anti-noise powers of different network nonlocal correlations. Indeed,  in a noisy scenario, network nonlocal correlations will decay as the network dilates, where noises maybe come from  entanglement generation, communications
over noisy quantum channels and imperfections
in measurements \cite{trinoisy, linearnoisy}. Considering these noises, Mukherjee et al showed that how the persistency of  non-$n$-local correlations in the linear and polygon networks were constrained under the influence of noises \cite{trinoisy, linearnoisy}.  Furthermore, we also observe that the noisy sensitivity of non-$n$-local correlations is different between the polygon and linear quantum networks.
In the paper, we conclude that the noisy star network non-$n$-local correlation has stronger persistency ability than the linear network one, even an arbitrary noncyclic network one. This means that star-type structure can offer the more anti-noise network non-multi-local correlation.
Recall the star network is
a primary motif of neural networks, which is important in a human-engineered computer network configuration \cite{zsnet}. Moreover, considering the more general partially consistent noises, we also discover the difference between noisy persistency of linear and star network non-$n$-local correlations.

This paper is structured as follows. In Section II, we review the topics of non-multi-local correlations in the star network and noise generation. In
Section III, we calculate the noisy star network local inequality criteria and obtain persistency  conditions of non-$n$-local correlations in the star network under influence of noises. Furthermore, we discover that the star network non-$n$-local correlations can be demonstrated infinitely under the consistent noises and plot the corresponding noise region. It is also proved that star network non-$n$-local correlations have the stronger persistency power than arbitrary noncyclic network one. Section IV, we propose the topic of partially consistent noises and analyze the change pattern of the maximal number of sources $n_{\rm max}$, which satisfies noisy star network non-$n$-local correlations can be demonstrated for $n\leq n_{\rm max}$.

\section{Preliminaries}\label{sec:2}

We recall some necessary notations  in the section.

\subsection{Matrix representation of a two-qubit state}

An arbitrary two-qubit state $\varrho$ has the following matrix representation,
\begin{align*}
\varrho =& \frac{1}{4}( \mathbb{I}_2\times \mathbb{I}_2+\vec a \vec\sigma\otimes \mathbb{I}_2 + \mathbb{I}_2 \otimes \vec b \vec \sigma\\
 & +\sum\limits_{{j_1}, {j_2} = 1}^3 {{\omega _{{j_1}{j_2}}}{\sigma _{{j_1}}} \otimes {\sigma _{{j_2}}}} )
\end{align*}
where $\vec \sigma=(\sigma_1,\sigma_2,\sigma_3)$, $\sigma _{j_k}$ are Pauli matrices,  $(j_k=1,2,3)$. $\vec a=(x_1,x_2,x_3)$ and $\vec b=(y_1,y_2,y_3)$ are local Bloch real vectors with $|\vec a|,|\vec b|\leq 1$ and $T_{\varrho} =(\omega_{j_1,j_2})_{3\times3}$ denotes  real correlation tensor, $\omega _{{j_1},{j_2}}=Tr[\rho\sigma_{j_1}\otimes\sigma_{j_2}]$. As $T_{\varrho}$ can be diagonalized, a simplified expression of $\varrho$ is
\begin{align*}
\varrho'=\frac{1}{4}( {\mathbb{I}_{2}\times \mathbb{I}_{2}+\vec {\mathfrak{a}} \vec \sigma   \otimes \mathbb{I}_2 + \mathbb{I}_2 \otimes \vec {\mathfrak{b}} \vec \sigma   + \sum\limits_{j= 1}^3 {{t _{jj}}{\sigma _{j}} \otimes {\sigma _{j}}}}).
\end{align*}
Where $ T={\rm diag}(t_{11},t_{22},t_{33})$ is the correlation matrix with $t_{11},t_{22},t_{33}$ being the eigenvalues of $\sqrt{T_{\varrho}^TT_{\varrho}}$, i.e., singular values of $T_{\varrho}$.

\subsection{Non-$n$-local correlations in star-shaped networks}

A star-shaped network composes of $n + 1$ parties ($n$ sources), where a central node (referred to as Bob) shares a bipartite state with each of the $n$ nodes (referred to as Alices) (see Fig. 1). Where bipartite states are provided by $n$ independent sources. Here assume each of the $n$ Alices performs dichotomic measurements with two outputs. The inputs are denoted by $x_i \in \{0, 1\}$ for the  $i$th Alice and the outcomes are denoted by $a_i \in \{0, 1\}$. Bob measures one input $y$ with 2 outputs labeled by $b_j \in \{0, 1\}$. The correlations in the star network is characterized by the probability decomposition \cite{9}
\begin{align*}
& p (\{a_i\}_{i=1,...,n},b|\{x_i\}_{i=1}^n,y)\\
& =\int\left( \Pi_{i = 1}^n d\lambda_i p(\lambda_i)p(a_i|x_i,\lambda_i)  \right) p(b|y,\{\lambda_i\}_{i=1,...,n}).
\end{align*}

\begin{figure}[h] \label{startu}
\centering
\includegraphics[width=3.5in]{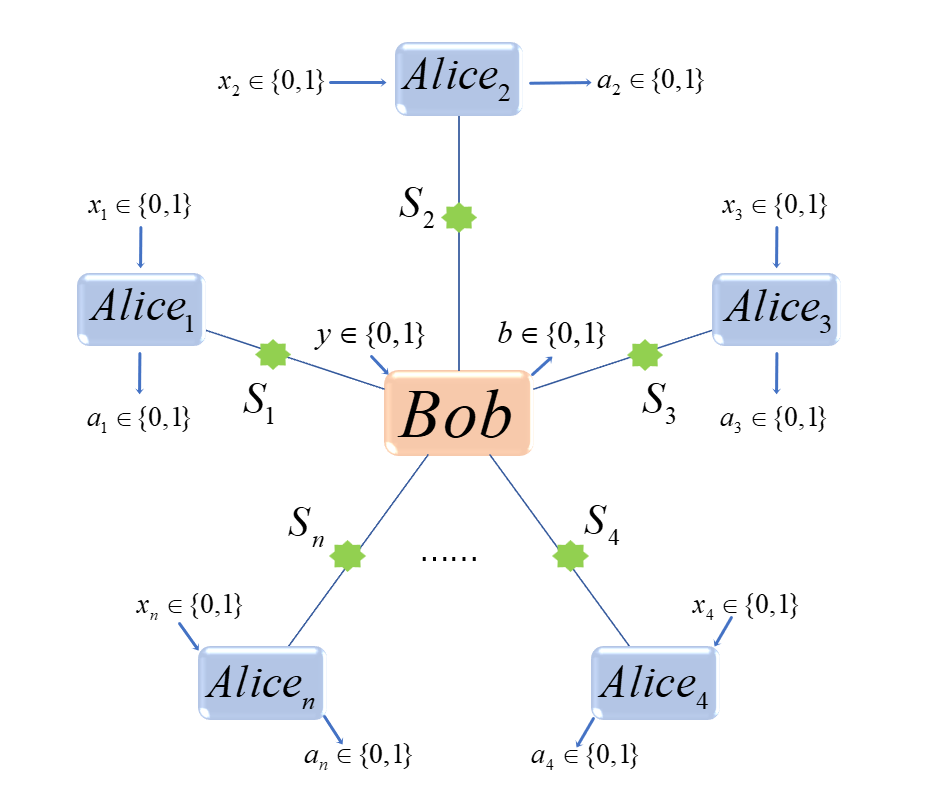}\\
\caption{\quad The star network scenario with $n+1$ parties}
\centering
\end{figure}
This implies the following $n$-local inequality criterion holds ture \cite{9}
\begin{equation}\label{e1}
\mathcal{S}_{\rm star}=|I|^{1/n}+|J|^{1/n}\leqslant 1 \end{equation}
where
\begin{equation*}
\begin{aligned} I&=\frac{1}{2^n}\sum\limits_{{x_1}...{x_n}} {\left\langle A^{1}_{x_1} ...A^{n}_{x_n}B_0 \right\rangle },\\
J&=\frac{1}{2^n}\sum\limits_{{x_1}...{x_n}}(-1)^{\sum\nolimits_i x_i} {\left\langle A^{1}_{x_1} ...A^{n}_{x_n}B_1 \right\rangle }\\
&{\left\langle A^{1}_{x_1} ...A^{n}_{x_n}B_y \right\rangle }= \\
&\sum\limits_{{a_1}...{a_n},b}(-1)^{b+\sum\nolimits_i a_i}p(\{a_i\}_{i=1,..,n},b|\{x_i\}_{i=1,..,n},y)
  \end{aligned}
\end{equation*}
Violation of Eq. (\ref{e1}) demonstrates the  non-$n$-local correlation in the star network.

When $\mathcal{S}_i(i = 1, 2,..., n)$ generate an arbitrary two qubit state $\varrho_i=\varrho_{A_iB}$, each  $A_i(i = 2, 3,..., n)$  receives one qubit of  $\varrho_{i}$. Bob  receives $n$ qubits. Let each of the Alice parties perform projection
in the Bell basis, referred to as a Bell state measurement. The inequality criterion (\ref{e1}) becomes
\begin{align}\label{qubitiq}
\mathcal{S}_{\rm star}=\sqrt{\left (\Pi_{i=1}^n t_1^{A_i} \right )^{1/n}+\left (\Pi_{i=1}^n t_2^{A_i} \right )^{1/n}}\leqslant 1
\end{align}
where $t_1^{A_i}$and $t_2^{A_i}$are the two greatest eigenvalues of the matrix $T_{\varrho_{{A_i} B}}^T T_{\varrho_{{A_i} B}}$ with $t_1^{A_i} \geq t_2^{A_i}$ \cite{10}.

\subsection{Noise generation}

The following kinds of noises will be considered in the paper, including imperfection in measurements, errors from entanglement generation and noises from communications.

{\bf Imperfection in measurements} Let $\beta_i \in  [0, 1]$ characterize imperfection in the measurement operator $M_i=M_i^{ideal}$, which means it fails to detect with probability $1 - \beta_i$ \cite{linearnoisy}. We call the corresponding noise is {\it consistent} if $\beta_i=\beta$ for all $i$ and constant $\beta$.  The corresponding noisy measurement operator  is denoted by $M_i^{noisy}$. Thus
\begin{eqnarray*}
M^{noisy}_{ik_{i}0}&=&\beta_i \mathcal{Q}_i^+ +\frac{1-\beta_i}{2}\mathbb{I}_{2\times2},\\
M^{noisy}_{ik_{i}1}&=&\beta_i \mathcal{Q}_i^- +\frac{1-\beta_i}{2}\mathbb{I}_{2\times2},
\forall i=1,2,...,n, \ k_i=0,1.
\end{eqnarray*}
So
\begin{eqnarray}\label{e2}
M^{noisy}_{0}&=&\mathop  \otimes \limits_{i = 1}^n M^{noisy}_{ik_{i}0},\\
M^{noisy}_{1}&=&\mathop  \otimes \limits_{i = 1}^n M^{noisy}_{ik_{i}1},
\forall i=1,2,...,n.k_i=0,1.
\end{eqnarray}
Where $\mathcal{Q}_i^+ (\mathcal{Q}_i^-)$ denotes the projection operator corresponding to the +1 (-1) eigenvalue, namely,  projections corresponding to perfect projective measurement. Denote  by $M^{noisy}_{ik_{i}0},M^{noisy}_{ik_{i}1}$   the projections corresponding to perfect measurement $\vec b_{k_1}.\vec \sigma$ to $B_{k_i}$ which is the Bob's component corresponding to the $i$th Alice, we obtain
\begin{eqnarray} \label{imm1}
M^{noisy}_{ik_{i}0}&=& \beta_i M^{ideal}_{ik_{i}0} +\frac{1-\beta_i}{2}\mathbb{I}_{2\times2},\\
M^{noisy}_{ik_{i}1}&=& \beta_i M^{ideal}_{ik_{i}0} +\frac{1-\beta_i}{2}\mathbb{I}_{2\times2}.
\end{eqnarray}

Similar,  for party $A_i$, let $\mu_i \in [0,1]$ parametrize a faulty measurement device. It means that  for single-qubit projection, such a device fails to detect any output with probability $1 - \mu_i$. POVM resulting due to imperfection in $\vec a_{k_i}.\vec \sigma$ to $A_{k_i}$ thus has two elements $\{P^{noisy}_{ik_i 0},P^{noisy}_{ik_i 1}\}$ given by
\begin{eqnarray*}
P^{noisy}_{ik_{i}0}&=&\mu_i \mathcal{O}_i^+ +\frac{1-\mu_i}{2}\mathbb{I}_{2\times2},\\
P^{noisy}_{ik_{i}1}&=&\mu_i \mathcal{O}_i^- +\frac{1-\mu_i}{2}\mathbb{I}_{2\times2},
\forall i=1,2,...,n.k_i=0,1.
\end{eqnarray*}
\iffalse Where $\mathcal{O}_i^+ (\mathcal{O}_i^-)$ denotes the projection operator corresponding to the +1 (-1) eigenvalue. $\mathcal{O}_i^+ (\mathcal{O}_i^-)$ denotes projectors corresponding to perfect projective measurement. Let   $P^{noisy}_{ik_{i}0},P^{noisy}_{ik_{i}1}$  be the projectors corresponding to perfect measurement $\vec a_{k_1}.\vec \sigma$ to $A_{k_i}$,\fi Write similarly
\begin{eqnarray}\label{imm3}
P^{noisy}_{ik_{i}0}&=&\mu_i P^{ideal}_{ik_{i}0} +\frac{1-\mu_i}{2}\mathbb{I}_{2\times2},\\ \label{imm4}
P^{noisy}_{ik_{i}1}&=&\mu_i P^{ideal}_{ik_{i}1} +\frac{1-\mu_i}{2}\mathbb{I}_{2\times2}.
\end{eqnarray}

{\bf Errors from entanglement generation}
An ideal entangled pure state is generated by acting on $|10\rangle$ with  Hadamard and CNOT gates.
However, in practical situations, imperfections from  preparation devices results in a mixed entangled state \cite{errorentangle}. Such errors come from applications of Hadamard and CNOT gates. In each source $\mathcal{S}_i$, let $\alpha_i$ and $\delta_i$ denote the imperfection parameters characterizing $\mathcal{H} $ and CNOT gates, respectively. For $i = 1, 2,..., n$, starting from $\varrho_i=|10\rangle \langle10|$, and the noisy Hadamard gate generates
\begin{align*}
\varrho'_i
&=\alpha_i(\mathcal{H} \otimes \mathbb{I}_2 \varrho_i \mathcal{H}^\dagger \otimes \mathbb{I}_2)+\frac{1-\alpha_i}{2} \mathbb{I}_2 \otimes \varrho_{2i}, (\varrho _{2i}=Tr(\varrho_i))\\
&=\frac{1}{2}(|00\rangle \langle00|+|10\rangle \langle10|)-\frac{\alpha_i}{2}(|00\rangle \langle10|+|10\rangle \langle00|)
\end{align*}
Takeing $\varrho'_i$ to noisy CNOT gives
\begin{align}\label{ns1}
\varrho ^{''}_i
=&\delta_i({\rm CNOT} \varrho'_i ({\rm CNOT})^\dagger )+\frac{1-\delta_i}{4} \mathbb{I}_2 \otimes \mathbb{I}_2 \nonumber\\
=&\frac{1}{4}\{\sum\limits_{k,j = 0}^1 {[1 + {{( - 1)}^{k + j}} \delta_i |kj\rangle \langle kj|]}\nonumber \\
&-2\alpha_i \delta_i (|11\rangle \langle00|+|00\rangle \langle11|) \}
\end{align}
The correlation tensor of $\varrho ^{''}_i$ is diag$(-\alpha_i \delta_i,\alpha_i \delta_i,\delta_i)$. Here we also call the error from Hadamard/CNOT gates is {\it consistent} if $\alpha_i$s or $ \delta_i$s equal to a constant.

{\bf Amplitude-damping (AD) and phase-damping (PD) channels}
For $ i=1,2,...,n $, let $\gamma_i^{amp}$ and $\xi_i^{amp}$ characterize amplitude-damping channels connecting $\mathcal{S}_i$ with $A_{i}$ and $B$, respectively. The amplitude-damping channel (e.g., parametrized by $\gamma_i^{amp}$) is represented by Krauss operators $|0\rangle \langle0|+\sqrt{1-\gamma^{amp}}|1\rangle \langle1|$ and$\sqrt{\gamma^{amp}}|0\rangle \langle1|$.
We also call the noise parameter from the amplitude-damping channel is {\it consistent} if $\gamma^{amp}$ is always a common constant. When $\gamma^{amp}=0$, the noise vanish. Similarly  let $\gamma_i^{ph},\xi_i^{ph}$ characterize channels connecting $\mathcal{S}_i$ with $A_{i}$ and $B$, respectively \cite{linearnoisy}.  Any amplitude-damping channel (e.g., parametrized by $\gamma_i^{ph}$) is represented by Krauss operators $|0\rangle \langle0|+\sqrt{1-\gamma^{ph}}|1\rangle \langle1|$ and$\sqrt{\gamma^{ph}}|1\rangle \langle1|$.

\section{Power of noisy persistency of star network non-$n$-local correlations}

In the section, we discover and analyze the stronger power of noisy persistency of star network non-$n$-local correlations.

\subsection{Inequality criteria}

In the following theorem, we consider the general case that each source is an abitrary two qubit state and measurements are imperfect.

\begin{theorem} \label{th1}
Assume each source $\mathcal{S}_i$ generating an arbitrary two-qubit state and all the parties performing imperfect measurements (\ref{imm1})- (\ref{imm4}), respectively. Then the noisy star network non-$n$-local correlations are demonstrated if
\begin{align}\label{max}
& \mathcal{S}_{\rm star}^{\rm noisy}= \nonumber\\ &(\Pi_{i=1}^n\mu_i\beta_i)^{1/n}\sqrt{(\Pi_{i=1}^nt_1^{A_i})^{1/n} +(\Pi_{i=1}^nt_2^{A_i})^{1/n} }>1.
\end{align}
where $t_1^{A_i}$and $t_2^{A_i}$are the two greatest positive eigenvalues of the matrix $T_{\varrho_{{A_i} B}}^T T_{\varrho_{{A_i} B}}$ with $t_1^{A_i} \geq t_2^{A_i}$.
\end{theorem}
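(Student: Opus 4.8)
The plan is to strip the imperfections off the measurements and push them onto the source states, turning the noisy star experiment into a \emph{noiseless} one on damped two-qubit states, and then to read off the value of $\mathcal{S}_{\mathrm{star}}$ from the noiseless qubit bound (\ref{qubitiq}). At the level of $\pm1$-valued observables the imperfect devices are exact rescalings: for Alice $i$, subtracting (\ref{imm4}) from (\ref{imm3}) the depolarizing remainders $\tfrac{1-\mu_i}{2}\mathbb{I}_{2\times2}$ cancel and $A^{i,\mathrm{noisy}}_{x_i}:=P^{\mathrm{noisy}}_{ix_i0}-P^{\mathrm{noisy}}_{ix_i1}=\mu_i A^{i,\mathrm{ideal}}_{x_i}$; for Bob, his single output bit in $\langle A^1_{x_1}\cdots A^n_{x_n}B_y\rangle$ is the parity of his $n$ single-qubit outcomes, so the $\pm1$ observable realizing $B_y$ is the tensor product $\bigotimes_{i=1}^n B^{i,\mathrm{ideal}}_y$, and inserting the imperfection (\ref{imm1}) on the $i$-th qubit turns that factor into $\beta_iB^{i,\mathrm{ideal}}_y$ (the $\tfrac{1-\beta_i}{2}\mathbb{I}_{2\times2}$ pieces drop out because the parity weight converts the tensor product of outcome operators into the product of single-qubit observables), whence $B^{\mathrm{noisy}}_y=\bigl(\prod_{i=1}^n\beta_i\bigr)B^{\mathrm{ideal}}_y$.

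Substituting these into $\langle A^1_{x_1}\cdots A^n_{x_n}B_y\rangle=\Tr\bigl[\bigl(\bigotimes_i\varrho_{A_iB}\bigr)(\bigotimes_i A^i_{x_i})\otimes B_y\bigr]$, the scalars $\mu_i,\beta_i$ pull out of the trace, so every noisy correlator equals the ideal one times the common, input-independent factor $\prod_{i=1}^n\mu_i\beta_i$. Summing over the inputs in the definitions of $I$ and $J$ gives $I^{\mathrm{noisy}}=\bigl(\prod_i\mu_i\beta_i\bigr)I^{\mathrm{ideal}}$, $J^{\mathrm{noisy}}=\bigl(\prod_i\mu_i\beta_i\bigr)J^{\mathrm{ideal}}$, hence $\mathcal{S}^{\mathrm{noisy}}_{\mathrm{star}}=\bigl(\prod_i\mu_i\beta_i\bigr)^{1/n}\bigl(|I^{\mathrm{ideal}}|^{1/n}+|J^{\mathrm{ideal}}|^{1/n}\bigr)$. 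Equivalently, by cyclicity of the trace the noisy devices on $\varrho_{A_iB}$ reproduce ideal measurements on the state with correlation tensor $\mu_i\beta_i T_{\varrho_{A_iB}}$, i.e.\ with the eigenvalues of $T^{T}T$ rescaled by $(\mu_i\beta_i)^2$.

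It remains to optimize the measurement directions. Choosing, as in the noiseless problem, the Alices' and Bob's directions along the eigenvectors of $T^{T}_{\varrho_{A_iB}}T_{\varrho_{A_iB}}$ belonging to its two largest eigenvalues $t_1^{A_i}\ge t_2^{A_i}$, the bracket attains its noiseless maximum $\sqrt{(\prod_i t_1^{A_i})^{1/n}+(\prod_i t_2^{A_i})^{1/n}}$ by (\ref{qubitiq}) of Ref.~\cite{10}; this makes $\mathcal{S}^{\mathrm{noisy}}_{\mathrm{star}}$ equal to the left-hand side of (\ref{max}). Since every $n$-local model satisfies $\mathcal{S}_{\mathrm{star}}\le1$ by (\ref{e1}), the hypothesis $\mathcal{S}^{\mathrm{noisy}}_{\mathrm{star}}>1$ certifies that the correlations produced by this strategy admit no $n$-local model, which is the claim of the theorem.

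The one delicate step is the reduction on Bob's side: one must use the parity structure of his single output bit, together with the tensor form of his outcome operators in (\ref{e2})--(\ref{imm1}), to check that the depolarizing remainders genuinely cancel and the attenuations come out as the clean product $\prod_i\beta_i$ — equivalently, that each source is hit by exactly one factor $\mu_i$ and one factor $\beta_i$, i.e.\ each eigenvalue of $T^{T}T$ by $(\mu_i\beta_i)^2$. This is precisely what produces the exponent $1/n$, and not $1/(2n)$, on $\prod_i\mu_i\beta_i$ in (\ref{max}); everything else is linearity of the trace together with the already-established noiseless optimization (\ref{qubitiq}).
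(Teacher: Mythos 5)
Your proposal is correct and follows essentially the same route as the paper's appendix: the depolarizing remainders cancel in the differences of POVM elements so each correlator picks up exactly one factor $\mu_i\beta_i$ per source, the noisy $\mathcal{S}_{\rm star}$ becomes $(\Pi_i\mu_i\beta_i)^{1/n}$ times the ideal quantity, and the remaining optimization is delegated to the noiseless qubit bound of Ref.~\cite{10}. The only difference is cosmetic — you factor Bob's parity observable globally as $\bigl(\Pi_i\beta_i\bigr)B_y^{\rm ideal}$ while the paper does the reduction source by source — so no further comment is needed.
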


%%\langle O_1 \otimes O_2 \rangle _{\rho} =Tr \left [(O_1\otimes O_2 )\rho \right ]=Tr \left [\Sigma_{j,k=1,2,3}(v_1^j v_2^k \sigma_j \otimes \sigma_k)\rho \right ]=\Sigma_{j,k=1,2,3} v_1^j v_2^k t_{jk} =\vec v_1(T_{\rho} \vec v_2)
\begin{proof}
See the Appendix.
\end{proof}

 When errors from entanglement generation occur like Eq. (\ref{ns1}), the $i$th source generates the noisy state
$\varrho ^{''}_i$ with $ T_{\varrho ^{''}_i}^*T_{\varrho ^{''}_i}={\rm diag}(\alpha_i^2 \delta_i^2,\alpha_i^2 \delta_i^2,\delta_i^2)$.
So we obtain the following corollary from Theorem \ref{th1}.

\begin{corollary} \label{nsco}
Assume each source $\mathcal{S}_i$ generate the noisy two-qubit state $\varrho ^{''}_i$ in Eq. (\ref{ns1}),  and all the parties performing imperfect measurements (\ref{imm1})- (\ref{imm4}), respectively. Then the noisy star network non-$n$-local correlations are demonstrated if
\begin{align}\label{nsv}
\mathcal{S}_{\rm star}^{\rm noisy}=\left(\Pi_{i = 1}^n {\beta_i \mu_i \delta_i} \right)^{1/n} \sqrt{\Pi_{i=1}^n \alpha_i^{2/n}+1 }>1.
\end{align}
\end{corollary}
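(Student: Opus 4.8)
The plan is to derive Corollary \ref{nsco} as a direct specialization of Theorem \ref{th1}. The key observation is that Corollary \ref{nsco} concerns exactly the same measurement scenario as Theorem \ref{th1} (imperfect measurements parametrized by the $\mu_i$ and $\beta_i$ via Eqs. (\ref{imm1})--(\ref{imm4})), with the only change being that the source state $\varrho_{A_iB}$ in Theorem \ref{th1} is now taken to be the specific noisy state $\varrho''_i$ from Eq. (\ref{ns1}) produced by faulty Hadamard and CNOT gates. So the entire content of the proof is to compute the two largest eigenvalues $t_1^{A_i}, t_2^{A_i}$ of $T_{\varrho''_i}^T T_{\varrho''_i}$ and substitute them into Eq. (\ref{max}).

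First I would recall, as already stated in the text just before Eq. (\ref{ns1}) and just before the corollary, that the correlation tensor of $\varrho''_i$ is $T_{\varrho''_i} = \mathrm{diag}(-\alpha_i\delta_i, \alpha_i\delta_i, \delta_i)$, hence $T_{\varrho''_i}^T T_{\varrho''_i} = \mathrm{diag}(\alpha_i^2\delta_i^2, \alpha_i^2\delta_i^2, \delta_i^2)$. Since $\alpha_i, \delta_i \in [0,1]$, we have $\alpha_i^2 \le 1$, so $\delta_i^2 \ge \alpha_i^2\delta_i^2$, and the two greatest eigenvalues are $t_1^{A_i} = \delta_i^2$ and $t_2^{A_i} = \alpha_i^2\delta_i^2$ (one should note the degenerate/boundary cases $\alpha_i = 1$ or $\delta_i = 0$ cause no problem, the ordering still holds). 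Then I would plug these into the left-hand side of Eq. (\ref{max}): the prefactor becomes $(\Pi_{i=1}^n \mu_i\beta_i)^{1/n}$ unchanged, while under the square root
\begin{align*}
\left(\Pi_{i=1}^n t_1^{A_i}\right)^{1/n} + \left(\Pi_{i=1}^n t_2^{A_i}\right)^{1/n} &= \left(\Pi_{i=1}^n \delta_i^2\right)^{1/n} + \left(\Pi_{i=1}^n \alpha_i^2\delta_i^2\right)^{1/n}\\
&= \left(\Pi_{i=1}^n \delta_i\right)^{2/n}\left(1 + \Pi_{i=1}^n \alpha_i^{2/n}\right).
\end{align*}
Taking the square root and absorbing the factor $(\Pi_i \delta_i)^{1/n}$ into the prefactor yields exactly $\left(\Pi_{i=1}^n \beta_i\mu_i\delta_i\right)^{1/n}\sqrt{\Pi_{i=1}^n \alpha_i^{2/n} + 1} > 1$, which is Eq. (\ref{nsv}).

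I do not anticipate a genuine obstacle here, since this is a substitution argument; the only thing requiring a moment's care is verifying the eigenvalue ordering $t_1^{A_i} \ge t_2^{A_i}$ given the constraints $\alpha_i, \delta_i \in [0,1]$, and confirming that the algebraic regrouping of the product terms under the radical is correct. If anything is subtle, it is making sure the "two greatest positive eigenvalues" language in Theorem \ref{th1} applies cleanly — i.e. that $\delta_i^2$ and $\alpha_i^2\delta_i^2$ are indeed the relevant ones even when $\delta_i$ is small, which follows since all three eigenvalues of $T_{\varrho''_i}^T T_{\varrho''_i}$ are nonnegative and two of them coincide. The full details are routine and are relegated along with the proof of Theorem \ref{th1} to the Appendix.
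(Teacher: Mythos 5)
Your proposal is correct and follows exactly the route the paper takes: the paper's own justification is the one-line remark that $T_{\varrho''_i}^{*}T_{\varrho''_i}=\mathrm{diag}(\alpha_i^2\delta_i^2,\alpha_i^2\delta_i^2,\delta_i^2)$, after which the corollary is obtained by substituting $t_1^{A_i}=\delta_i^2$ and $t_2^{A_i}=\alpha_i^2\delta_i^2$ into Theorem~\ref{th1}. Your write-up supplies the same substitution together with the (correct) eigenvalue ordering check and the algebraic regrouping, so it is, if anything, more detailed than the paper's.
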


Furthermore, when each noisy two-qubit state $\varrho ^{''}_i$ is sent by  amplitude-damping channels parameterized by $\gamma_i^{amp}$ and $\xi_i^{amp}$,  the output state  $\varrho ^{'''}_i$ is with $T_{\varrho ^{'''}_i}^*T_{\varrho ^{'''}_i}={\rm diag}(\alpha_i^2 \delta_i^2 D_i^{amp},\alpha_i^2 \delta_i^2 D_i^{amp},(\delta_i D_i^{amp}+\gamma_i^{amp}\xi_i^{amp})^2)$, where $D_i^{amp}=(1-\gamma^{amp}_i)(1-\xi^{amp}_i)$ \cite{linearnoisy}.

\begin{corollary} \label{channel1}
Assume each source $\mathcal{S}_i$ generates the noisy two-qubit state $\varrho ^{''}_i$, $\gamma_i^{amp}$ and $\xi_i^{amp}$ characterize amplitude-damping channels connecting $\mathcal{S}_i$ with $\mathcal{A}_{i}$ and $B$ respectively,  and all the parties performing imperfect measurements (\ref{imm1})- (\ref{imm4}), respectively. Then the noisy star network non-$n$-local correlations are demonstrated if
\begin{equation}
\label{c1v}
\mathcal{S}_{\rm star}^{\rm noisy}=\left(\Pi_{i = 1}^n {\beta_i \mu_i } \right)^{1/n}\sqrt{{\rm Max}(F_1,F_2)}>1,
\end{equation}
where
\begin{equation*}
  \begin{aligned}
F_1=&2\left(\Pi_{i = 1}^n \alpha_i^2 \delta_i^2 D_i^{amp}\right)^{1/n},\\
F_2=&\left(\Pi_{i = 1}^n \alpha_i^2 \delta_i^2 D_i^{amp}\right)^{1/n}\\
&+\Pi_{i = 1}^n(\delta_i D_i^{amp}+\gamma_i^{amp}\xi_i^{amp})^{2/n}
  \end{aligned}
\end{equation*}
\end{corollary}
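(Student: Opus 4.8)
The plan is to obtain Corollary~\ref{channel1} as a direct specialization of Theorem~\ref{th1}, taking the channel outputs $\varrho^{'''}_i$ as the effective source states and then relaxing the resulting bound into the stated closed form. Since the amplitude-damping channels act only on the transmitted qubits and leave the (already imperfect) measurement devices untouched, the parameters $\mu_i,\beta_i$ enter exactly as in Theorem~\ref{th1}; hence the star-network $n$-local inequality~(\ref{max}) is violated whenever $(\Pi_{i=1}^n\mu_i\beta_i)^{1/n}\sqrt{(\Pi_{i=1}^n t_1^{A_i})^{1/n}+(\Pi_{i=1}^n t_2^{A_i})^{1/n}}>1$, where now $t_1^{A_i}\ge t_2^{A_i}>0$ are the two largest eigenvalues of $T_{\varrho^{'''}_i}^{*}T_{\varrho^{'''}_i}$.

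Next I would read off the spectrum of $T_{\varrho^{'''}_i}^{*}T_{\varrho^{'''}_i}$ from the formula recalled just before the corollary: it is $\{\lambda_i^{(1)},\lambda_i^{(1)},\lambda_i^{(2)}\}$ with $\lambda_i^{(1)}:=\alpha_i^2\delta_i^2 D_i^{amp}$ and $\lambda_i^{(2)}:=(\delta_i D_i^{amp}+\gamma_i^{amp}\xi_i^{amp})^2$. The useful observation is that, because $\lambda_i^{(1)}$ has multiplicity two, the second-largest eigenvalue is $\lambda_i^{(1)}$ no matter how $\lambda_i^{(1)}$ and $\lambda_i^{(2)}$ compare: sorting $\{\lambda_i^{(1)},\lambda_i^{(1)},\lambda_i^{(2)}\}$ always gives $t_2^{A_i}=\lambda_i^{(1)}$, while $t_1^{A_i}=\max\{\lambda_i^{(1)},\lambda_i^{(2)}\}$. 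In particular $(\Pi_{i=1}^n t_2^{A_i})^{1/n}=(\Pi_{i=1}^n\alpha_i^2\delta_i^2 D_i^{amp})^{1/n}$, which is exactly the term common to $F_1$ and $F_2$.

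Finally I would bound the $t_1$-term using the elementary fact that $\Pi_{i=1}^n\max\{\lambda_i^{(1)},\lambda_i^{(2)}\}\ge\max\{\Pi_{i=1}^n\lambda_i^{(1)},\Pi_{i=1}^n\lambda_i^{(2)}\}$ for nonnegative reals, so that $(\Pi_{i=1}^n t_1^{A_i})^{1/n}\ge\max\{(\Pi_{i=1}^n\lambda_i^{(1)})^{1/n},(\Pi_{i=1}^n\lambda_i^{(2)})^{1/n}\}$. Writing $A=(\Pi_{i=1}^n\alpha_i^2\delta_i^2 D_i^{amp})^{1/n}$ and $B=\Pi_{i=1}^n(\delta_i D_i^{amp}+\gamma_i^{amp}\xi_i^{amp})^{2/n}$ and using $\max\{A,B\}+A=\max\{2A,A+B\}$, this yields $(\Pi_{i=1}^n t_1^{A_i})^{1/n}+(\Pi_{i=1}^n t_2^{A_i})^{1/n}\ge{\rm Max}(F_1,F_2)$ since $2A=F_1$ and $A+B=F_2$. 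Consequently, if $(\Pi_{i=1}^n\mu_i\beta_i)^{1/n}\sqrt{{\rm Max}(F_1,F_2)}>1$ then the genuine violation quantity of Theorem~\ref{th1} also exceeds $1$, inequality~(\ref{max}) is violated, and the non-$n$-local correlation in the star network is demonstrated.

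The computation is short, so there is no serious obstacle — only two points to handle with care. First, one must track the multiplicity-two eigenvalue correctly, as this is precisely what makes $t_2^{A_i}$ independent of the parameter ordering and what separates the two uniform alternatives $2A$ and $A+B$; this also shows that~(\ref{c1v}) is in general only a sufficient condition, coinciding with the exact criterion exactly in the consistent-noise regime where all of $\alpha_i,\delta_i,\gamma_i^{amp},\xi_i^{amp}$ are constant in $i$. Second, one should confirm that Theorem~\ref{th1} is built on the two largest eigenvalues counted with multiplicity, which is what licenses aligning one of Bob's measurement settings with the $\lambda_i^{(2)}$-eigenvector and the other with a vector in the two-dimensional $\lambda_i^{(1)}$-eigenspace.
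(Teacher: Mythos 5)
Your proposal is correct and follows essentially the same route as the paper, which simply records the correlation tensor $T_{\varrho'''_i}^{*}T_{\varrho'''_i}={\rm diag}(\alpha_i^2\delta_i^2 D_i^{amp},\alpha_i^2\delta_i^2 D_i^{amp},(\delta_i D_i^{amp}+\gamma_i^{amp}\xi_i^{amp})^2)$ and substitutes its two largest eigenvalues into Theorem~\ref{th1}. Your explicit handling of the multiplicity-two eigenvalue and the bound $\Pi_i\max\{\lambda_i^{(1)},\lambda_i^{(2)}\}\ge\max\{\Pi_i\lambda_i^{(1)},\Pi_i\lambda_i^{(2)}\}$ — showing that ${\rm Max}(F_1,F_2)$ is a valid lower bound (hence a sufficient criterion) even when the ordering of the eigenvalues varies with $i$ — is a detail the paper leaves implicit, and it is exactly the right justification.
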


Instead of amplitude-damping channels, we consider the noises from phase-damping channels parameterized by $\gamma_i^{ph}$ and $\xi_i^{ph}$. Here the output state  $\varrho ^{'''}_i$ is with $T_{\varrho ^{'''}_i}^*T_{\varrho ^{'''}_i}={\rm diag}(\alpha_i^2 \delta_i^2 D_i^{ph},\alpha_i^2 \delta_i^2 D_i^{ph},\delta_i^2)$, where $D_i^{ph}=(1-\gamma^{ph}_i)(1-\xi^{ph}_i)$ \cite{linearnoisy}.

\begin{corollary} \label{channel2}
Assume each source $\mathcal{S}_i$ generates the noisy two-qubit state $\varrho ^{''}_i$, $\gamma_i^{ph}$ and $\xi_i^{ph}$ characterize amplitude-damping channels connecting $\mathcal{S}_i$ with $\mathcal{A}_{i}$ and $B$ respectively,  and all the parties performing imperfect measurements (\ref{imm1})- (\ref{imm4}). Then the noisy star network non-$n$-local correlations are demonstrated if
\begin{align}\label{channel2v}
& \mathcal{S}_{\rm star}^{\rm noisy}= \nonumber\\ &\left(\Pi_{i = 1}^n {\beta_i \mu_i } \right)^{1/n}\sqrt{\left(\Pi_{i = 1}^n \alpha_i^2 \delta_i^2 D_i^{ph}\right)^{1/n}+\left(\Pi_{i = 1}^n \delta_i \right)^{2/n}}>1.
\end{align}

\end{corollary}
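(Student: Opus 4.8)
The plan is to obtain Corollary \ref{channel2} as a straightforward specialization of Theorem \ref{th1}. The only thing that changes relative to the setup of Theorem \ref{th1} is the identity of the two-qubit state that actually reaches the measurement stage of each branch of the star: instead of a generic $\varrho_{A_iB}$, branch $i$ now carries the state $\varrho'''_i$ obtained by first applying the faulty entanglement-generation map to get $\varrho''_i$ of Eq. (\ref{ns1}) and then sending the two qubits through the phase-damping channels with parameters $\gamma_i^{ph}$ and $\xi_i^{ph}$. So the first step is just to record, or re-derive, the correlation tensor of $\varrho'''_i$; after that the criterion (\ref{max}) of Theorem \ref{th1} does all the work.

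For the correlation-tensor step I would start from the fact (stated just below Eq. (\ref{ns1})) that $\varrho''_i$ has correlation tensor ${\rm diag}(-\alpha_i\delta_i,\alpha_i\delta_i,\delta_i)$, together with the elementary observation that a single-qubit phase-damping channel with Kraus operators $|0\rangle\langle0|+\sqrt{1-\gamma}|1\rangle\langle1|$ and $\sqrt{\gamma}|1\rangle\langle1|$ leaves $\mathbb{I}_2$ and $\sigma_3$ invariant while shrinking $\sigma_1,\sigma_2$ by the factor $\sqrt{1-\gamma}$. Acting with the two channels on the two qubits of $\varrho''_i$ therefore multiplies the $(1,1)$ and $(2,2)$ entries of the correlation tensor by $\sqrt{(1-\gamma_i^{ph})(1-\xi_i^{ph})}=\sqrt{D_i^{ph}}$ and leaves the $(3,3)$ entry untouched, giving correlation tensor ${\rm diag}(-\alpha_i\delta_i\sqrt{D_i^{ph}},\,\alpha_i\delta_i\sqrt{D_i^{ph}},\,\delta_i)$ and hence $T_{\varrho'''_i}^{*}T_{\varrho'''_i}={\rm diag}(\alpha_i^2\delta_i^2 D_i^{ph},\,\alpha_i^2\delta_i^2 D_i^{ph},\,\delta_i^2)$, in agreement with \cite{linearnoisy}.

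Next I would order the three eigenvalues. Since $\alpha_i\in[0,1]$ and $D_i^{ph}=(1-\gamma_i^{ph})(1-\xi_i^{ph})\in[0,1]$, we have $\alpha_i^2 D_i^{ph}\le 1$, so $\delta_i^2\ge\alpha_i^2\delta_i^2 D_i^{ph}$; hence the two greatest eigenvalues are $t_1^{A_i}=\delta_i^2$ and $t_2^{A_i}=\alpha_i^2\delta_i^2 D_i^{ph}$ (the eigenvalue $\alpha_i^2\delta_i^2 D_i^{ph}$ has multiplicity two, but one copy of it is still the runner-up). Substituting these into (\ref{max}) and using $(\Pi_{i=1}^n\delta_i^2)^{1/n}=(\Pi_{i=1}^n\delta_i)^{2/n}$ reproduces inequality (\ref{channel2v}) exactly, which finishes the argument.

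The only place any real work is hidden is the correlation-tensor computation, and there the content is merely bookkeeping of how a product of two independent single-qubit phase-damping channels transforms the Pauli basis; alternatively it can simply be quoted from \cite{linearnoisy}. A minor caveat is that Theorem \ref{th1} is phrased in terms of the two greatest \emph{positive} eigenvalues, so strictly it applies when $\alpha_i\delta_i\neq 0$ and $\delta_i\neq 0$; in the degenerate cases the left-hand side of (\ref{channel2v}) equals $0$ and the stated sufficient condition is vacuous, so nothing is lost and the formula extends by continuity.
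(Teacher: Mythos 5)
Your proposal is correct and follows the same route the paper takes: it records the correlation tensor $T_{\varrho'''_i}^{*}T_{\varrho'''_i}={\rm diag}(\alpha_i^2\delta_i^2 D_i^{ph},\alpha_i^2\delta_i^2 D_i^{ph},\delta_i^2)$ (quoted from the reference, exactly as the paper does just before the corollary) and then substitutes the two largest eigenvalues $\delta_i^2$ and $\alpha_i^2\delta_i^2 D_i^{ph}$ into the criterion of Theorem \ref{th1}. Your explicit ordering argument $\alpha_i^2 D_i^{ph}\le 1$, which explains why no $\mathrm{Max}(F_1,F_2)$ appears here unlike in the amplitude-damping case, is a small but welcome addition the paper leaves implicit.
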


An amazing and exclusive-to-star-network characteristic is observed that the network dilation index $n$ will vanish from formulas \ref{nsv}, \ref{c1v} and \ref{channel2v} when the corresponding noises are consistent. Indeed, assume that $\alpha_i=\alpha, \delta_i=\delta, \mu_i=\mu,\beta_i=\beta$, $\forall i = 1, 2,..., n$.
From Corollary \ref{nsco}, we have the following conclusion.

\begin{corollary} \label{cnsco}
With the same assumption in Corollary \ref{nsco}, and each kind of noises is consistent. Then the noisy star network non-$n$-local correlations are demonstrated if
\begin{align}\label{cnsv}
\mathcal{S}_{\rm star}^{\rm noisy}=\beta \mu \delta  \sqrt{\alpha^2 +1 }>1.
\end{align}
\end{corollary}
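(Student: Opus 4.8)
The plan is to obtain Corollary \ref{cnsco} as an immediate specialization of Corollary \ref{nsco}. All hypotheses of Corollary \ref{nsco} are retained (each source emits the noisy state $\varrho^{''}_i$ of Eq. (\ref{ns1}) and every party measures through the imperfect operators (\ref{imm1})--(\ref{imm4})), so the sufficient condition (\ref{nsv}) for demonstrating the non-$n$-local correlation applies verbatim. It then remains only to evaluate the quantity $\mathcal{S}_{\rm star}^{\rm noisy}$ in (\ref{nsv}) under the consistency assumptions $\alpha_i=\alpha$, $\delta_i=\delta$, $\mu_i=\mu$, $\beta_i=\beta$ for all $i=1,\dots,n$, and to rewrite the resulting inequality.

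First I would collapse the prefactor: since every factor $\beta_i\mu_i\delta_i$ equals the common value $\beta\mu\delta$, we get $\left(\Pi_{i=1}^n\beta_i\mu_i\delta_i\right)^{1/n}=\left((\beta\mu\delta)^n\right)^{1/n}=\beta\mu\delta$. Next I would collapse the radicand: $\Pi_{i=1}^n\alpha_i^{2/n}=\Pi_{i=1}^n\alpha^{2/n}=\alpha^{2}$, hence $\sqrt{\Pi_{i=1}^n\alpha_i^{2/n}+1}=\sqrt{\alpha^2+1}$. Substituting both simplifications into (\ref{nsv}) yields $\mathcal{S}_{\rm star}^{\rm noisy}=\beta\mu\delta\sqrt{\alpha^2+1}$, so the violation requirement $\mathcal{S}_{\rm star}^{\rm noisy}>1$ becomes precisely (\ref{cnsv}), which is the claim.

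There is essentially no obstacle here: the entire content is the observation that the $n$th-root geometric means appearing in (\ref{nsv}) reduce to a single factor once all per-source noise parameters coincide, which is exactly the mechanism by which the network-dilation index $n$ disappears from the criterion. The only point worth recording is that the ``$\alpha^2$'' and the ``$1$'' inside the square root originate from the two greatest eigenvalues $t_1^{A_i}=\delta^2$ and $t_2^{A_i}=\alpha^2\delta^2$ of $T_{\varrho^{''}_i}^{T}T_{\varrho^{''}_i}=\mathrm{diag}(\alpha^2\delta^2,\alpha^2\delta^2,\delta^2)$ (using the correlation tensor $\mathrm{diag}(-\alpha\delta,\alpha\delta,\delta)$ noted after Eq. (\ref{ns1})); this is already built into Corollary \ref{nsco} and requires no re-derivation.
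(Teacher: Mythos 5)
Your proposal is correct and matches the paper's own (essentially one-line) derivation: the paper obtains Corollary \ref{cnsco} by substituting $\alpha_i=\alpha$, $\delta_i=\delta$, $\mu_i=\mu$, $\beta_i=\beta$ into Eq. (\ref{nsv}) so that the $n$th-root geometric means collapse and $n$ drops out, exactly as you do. Your closing remark correctly identifies $t_1^{A_i}=\delta^2\geq t_2^{A_i}=\alpha^2\delta^2$ as the source of the $\sqrt{\alpha^2+1}$ factor, which is indeed already built into Corollary \ref{nsco}.
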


The corollary says that the star network non-$n$-locality will be demonstrated for arbitrary $n$ if the noisy parameters satisfy $\beta \mu \delta  \sqrt{\alpha^2 +1 }>1$. The polygon and linear networks do not meet the requirement, where $n$ is always bounded not to be infinite  because of constraint from the noisy parameters \cite{trinoisy} and \cite{linearnoisy}. Similarly, corresponding to Corollary \ref{channel1} and \ref{channel2}, we have the following corollaries.

\begin{corollary} \label{cchannel1}
With the same assumptions with Corollary \ref{channel1}, and each kind of noises is consistent.  Then the noisy star network non-$n$-local correlations are demonstrated if
\begin{align}
\label{cchannel1v}
& \mathcal{S}_{\rm star}^{\rm noisy}=\nonumber\\ & \beta\mu \sqrt{{\rm Max}(2\alpha^2 \delta^2 D^{amp},\alpha^2 \delta^2D^{amp}+(\delta D^{amp}+\gamma^{amp}\xi^{amp})^2)}\nonumber\\ & >1,
\end{align}
where $
  D^{amp}=(1-\gamma^{amp})(1-\xi^{amp})$.
\end{corollary}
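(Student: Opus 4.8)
The plan is to derive Corollary \ref{cchannel1} as a direct specialization of Corollary \ref{channel1} under the consistency hypothesis, so essentially no new argument is needed beyond substitution. First I would recall that Corollary \ref{channel1} already gives the sufficient condition $\mathcal{S}_{\rm star}^{\rm noisy}=\left(\Pi_{i=1}^n\beta_i\mu_i\right)^{1/n}\sqrt{{\rm Max}(F_1,F_2)}>1$, with $F_1$ and $F_2$ built from the per-source quantities $\alpha_i,\delta_i,\gamma_i^{amp},\xi_i^{amp}$ and $D_i^{amp}=(1-\gamma_i^{amp})(1-\xi_i^{amp})$. The whole content of the present corollary is that when the noise is consistent these products telescope: imposing $\beta_i=\beta$, $\mu_i=\mu$, $\alpha_i=\alpha$, $\delta_i=\delta$, $\gamma_i^{amp}=\gamma^{amp}$, $\xi_i^{amp}=\xi^{amp}$ for all $i$, every factor inside a geometric mean of $n$ identical terms collapses to that term.

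The key steps, in order, are: (i) under consistency, $D_i^{amp}=(1-\gamma^{amp})(1-\xi^{amp})=:D^{amp}$ is independent of $i$; (ii) therefore $\left(\Pi_{i=1}^n\beta_i\mu_i\right)^{1/n}=(\beta\mu)^{n/n}=\beta\mu$; (iii) likewise $\left(\Pi_{i=1}^n\alpha_i^2\delta_i^2 D_i^{amp}\right)^{1/n}=\alpha^2\delta^2 D^{amp}$, so $F_1=2\alpha^2\delta^2 D^{amp}$; and (iv) $\Pi_{i=1}^n(\delta_i D_i^{amp}+\gamma_i^{amp}\xi_i^{amp})^{2/n}=(\delta D^{amp}+\gamma^{amp}\xi^{amp})^2$, so $F_2=\alpha^2\delta^2 D^{amp}+(\delta D^{amp}+\gamma^{amp}\xi^{amp})^2$. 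Substituting (ii)--(iv) into the inequality of Corollary \ref{channel1} yields exactly \eqref{cchannel1v}, and crucially the exponent $1/n$ has disappeared, so the condition no longer depends on the number of sources $n$.

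Since each of these steps is a one-line identity, there is essentially no obstacle: the only thing to be careful about is bookkeeping of the exponents $1/n$ versus $2/n$ in $F_1$ and $F_2$ — making sure that a product of $n$ copies raised to the power $1/n$ (resp.\ $2/n$) really does return the single factor (resp.\ its square) and not something off by a power. One should also note explicitly that $D^{amp}\in[0,1]$ and all parameters lie in $[0,1]$, so the quantity under the square root is nonnegative and the expression is well defined; and that, as in the passage preceding the corollary, this is the promised ``$n$ vanishes'' phenomenon, which is what distinguishes the star topology from the polygon and linear ones. Hence the proof is simply: ``Set all noise parameters equal to their common values in Corollary \ref{channel1}; the geometric means over $n$ identical factors collapse and \eqref{cchannel1v} follows.''
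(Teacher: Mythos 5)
Your proposal is correct and matches the paper's (implicit) argument exactly: the paper obtains Corollary \ref{cchannel1} from Corollary \ref{channel1} precisely by setting all noise parameters to their common values, whereupon each geometric mean $\bigl(\Pi_{i=1}^n(\cdot)\bigr)^{1/n}$ of identical factors collapses and the dependence on $n$ disappears. Your exponent bookkeeping in steps (ii)--(iv) is accurate, so nothing further is needed.
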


\begin{corollary} \label{cchannel2}
With the same assumptions with Corollary \ref{channel2}, and each kind of noises is consistent.  Then the noisy star network non-$n$-local correlations are demonstrated if
\begin{align}\label{cchannel2v}
\mathcal{S}_{\rm star}^{\rm noisy}=\beta \mu \delta\sqrt{ \alpha^2(1-\gamma^{ph})(1-\xi^{ph})+1}>1.
\end{align}

\end{corollary}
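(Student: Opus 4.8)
The plan is to obtain Corollary \ref{cchannel2} as a direct specialization of Corollary \ref{channel2}, exactly the way Corollary \ref{cnsco} follows from Corollary \ref{nsco}. So first I would invoke Corollary \ref{channel2}, which already gives the demonstration condition
\begin{equation*}
\left(\Pi_{i = 1}^n \beta_i \mu_i \right)^{1/n}\sqrt{\left(\Pi_{i = 1}^n \alpha_i^2 \delta_i^2 D_i^{ph}\right)^{1/n}+\left(\Pi_{i = 1}^n \delta_i \right)^{2/n}}>1
\end{equation*}
for arbitrary (not necessarily consistent) noise parameters, where $D_i^{ph}=(1-\gamma_i^{ph})(1-\xi_i^{ph})$. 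The only work is to substitute the consistency hypotheses $\beta_i=\beta$, $\mu_i=\mu$, $\alpha_i=\alpha$, $\delta_i=\delta$, $\gamma_i^{ph}=\gamma^{ph}$, $\xi_i^{ph}=\xi^{ph}$ for all $i=1,\dots,n$ and simplify.

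The key computational step is that each factor of the form $\left(\Pi_{i=1}^n c\right)^{1/n}$ collapses: $\left(\Pi_{i=1}^n \beta\mu\right)^{1/n}=\beta\mu$, $\left(\Pi_{i=1}^n \alpha^2\delta^2 D^{ph}\right)^{1/n}=\alpha^2\delta^2 D^{ph}$ with $D^{ph}=(1-\gamma^{ph})(1-\xi^{ph})$, and $\left(\Pi_{i=1}^n \delta\right)^{2/n}=\delta^2$. Substituting these into the inequality above yields $\beta\mu\sqrt{\alpha^2\delta^2(1-\gamma^{ph})(1-\xi^{ph})+\delta^2}>1$, and pulling the common $\delta^2$ out of the square root gives $\beta\mu\delta\sqrt{\alpha^2(1-\gamma^{ph})(1-\xi^{ph})+1}>1$, which is precisely Eq. (\ref{cchannel2v}). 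Since every term under the root is nonnegative this factoring is valid, and the geometric means are over $n$ identical copies so no convexity or inequality estimate is needed — the dilation index $n$ simply disappears.

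There is essentially no obstacle here; the statement is a clean corollary and the proof is two or three lines of algebra. The only thing worth a sentence of care is noting why the result is \emph{interesting} rather than why it is \emph{true}: because the right-hand side no longer depends on $n$, the violation condition, once satisfied, holds for all $n$, so the persistency number of sources is unbounded — in contrast to the linear and polygon cases where an analogous substitution still leaves an $n$-dependent (hence eventually binding) factor. I would therefore present the proof as a one-paragraph specialization of Corollary \ref{channel2} and, if desired, append the remark that $\mathcal{S}_{\rm star}^{\rm noisy}$ in Eq. (\ref{cchannel2v}) is manifestly independent of $n$.

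\begin{proof}
Apply Corollary \ref{channel2} and impose the consistency conditions $\beta_i=\beta$, $\mu_i=\mu$, $\alpha_i=\alpha$, $\delta_i=\delta$, $\gamma_i^{ph}=\gamma^{ph}$ and $\xi_i^{ph}=\xi^{ph}$ for all $i=1,2,\dots,n$. Then $D_i^{ph}=(1-\gamma^{ph})(1-\xi^{ph})=:D^{ph}$ for every $i$, and each geometric mean over $n$ identical factors reduces to that common factor:
\begin{equation*}
\left(\Pi_{i=1}^n \beta_i\mu_i\right)^{1/n}=\beta\mu,\quad \left(\Pi_{i=1}^n \alpha_i^2\delta_i^2 D_i^{ph}\right)^{1/n}=\alpha^2\delta^2 D^{ph},\quad \left(\Pi_{i=1}^n \delta_i\right)^{2/n}=\delta^2.
\end{equation*}
Substituting these into Eq. (\ref{channel2v}) gives that the noisy star network non-$n$-local correlations are demonstrated if
\begin{equation*}
\mathcal{S}_{\rm star}^{\rm noisy}=\beta\mu\sqrt{\alpha^2\delta^2 D^{ph}+\delta^2}>1.
\end{equation*}
Since $\alpha^2 D^{ph}+1\geq 0$, we may factor $\delta^2$ out of the square root to obtain
\begin{equation*}
\mathcal{S}_{\rm star}^{\rm noisy}=\beta\mu\delta\sqrt{\alpha^2(1-\gamma^{ph})(1-\xi^{ph})+1}>1,
\end{equation*}
which is Eq. (\ref{cchannel2v}). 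As the right-hand expression no longer contains $n$, the condition, once met, is met for every $n$.
\end{proof}
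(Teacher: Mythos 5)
Your proof is correct and follows exactly the route the paper intends: the paper states this corollary as an immediate specialization of Corollary \ref{channel2} under consistent noise parameters (``Similarly, corresponding to Corollary \ref{channel1} and \ref{channel2}...''), and your substitution of the consistency conditions, collapse of the geometric means, and factoring of $\delta^2$ out of the square root is precisely that computation. Nothing is missing.
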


We observe that $n$ also vanish from (\ref{cchannel1v}) and (\ref{cchannel2v}). In summary, we may denote by the fixed set notation $\Omega^{\rm noisy}(v_1, v_2, ...,v_k)$ ($v_i$ is the $i$th kind of the noisy parameters)   the corresponding region of noisy parameters such that the star network non-$n$-locality can be detected  for arbitrary $n$. Also one
can see from Corollaries \ref{cnsco}, \ref{cchannel1} and \ref{cchannel2} that different consistency noisy parameters make distinguished effects on the set $\Omega^{\rm noisy}(v_1, v_2, ...,v_k)$. In the following section, we devote to proving that star network nonlocal correlation is most anti-noise among noncyclic network ones, and discovering the change pattern of the set  $\Omega^{\rm noisy}(v_1, v_2, ...,v_k)$ under different combinations of noises,  including the single and mixed noises. While, comparing with results in linear network case \cite{linearnoisy}, we explore the difference between the two network nonlocal correlations.

\subsection{Anti-noise power of star network non-$n$-local correlations}

Here, we first observe that  star network non-$n$-local correlations are strongest  among noncyclic networks with the same number of sources. Indeed,  With the same assumption with Theorem 1 for an arbitrary  noncyclic network with $n$ independent sources,  by the calculation similar to the proof of Theorem 1, it follows from \cite{15} that the noisy noncyclic network non-$n$-local correlations are demonstrated if
\begin{align}\label{max}
\nonumber
& \mathcal{S}_{\rm noncyclic}^{\rm noisy}= \nonumber\\ &(\Pi_{i=1}^n\mu_i\beta_i)^{1/p_n}\sqrt{(\Pi_{i=1}^nt_1^{A_i})^{1/p_n} +(\Pi_{i=1}^nt_2^{A_i})^{1/p_n} }>1.\nonumber
\end{align}
where $p_n$ denotes the source number in the last layer of this noncyclic network, $t_1^{A_i}$ and $t_2^{A_i}$ are the two greatest positive eigenvalues of the matrix $T_{\varrho_{{A_i} B}}^T T_{\varrho_{{A_i} B}}$ with $t_1^{A_i} \geq t_2^{A_i}$.
Note that each $t_1^{A_i}\leq 1$ and $p_n\leq n$. Comparing to Ineq. (3.1), we have
\begin{align}
&\mathcal{S}_{\rm noncyclic}^{\rm noisy}\nonumber\\ & =(\Pi_{i=1}^n\mu_i\beta_i)^{1/p_n}\sqrt{(\Pi_{i=1}^nt_1^{A_i})^{1/p_n} +(\Pi_{i=1}^nt_2^{A_i})^{1/p_n} } \nonumber\\ & \leq(\Pi_{i=1}^n\mu_i\beta_i)^{1/n}\sqrt{(\Pi_{i=1}^nt_1^{A_i})^{1/n} +(\Pi_{i=1}^nt_2^{A_i})^{1/n} }\nonumber\\ & =\mathcal{S}_{\rm star}^{\rm noisy}.
\end{align}
 Namely,  $\mathcal{S}_{\rm noncyclic}^{\rm noisy}>1$ implies $\mathcal{S}_{\rm star}^{\rm noisy}>1$. So we conclude the following theorem.

\begin{theorem}
Assume $\mathcal{A}$ is an arbitrary noncyclic network with parties $A_i$s, and $\mathcal{B}$ is a star network  with parties $B_j$s. All source noisy states of $\mathcal{A}$ and $\mathcal{B}$ are the same. $A_k$ and $B_k$ perform the same noisy measurements if $i=j=k$. Denote $n$ by the source number of $\mathcal{A}$ ($\mathcal{B}$). Then  under the same noisy level, the non-$n$-local correlation is demonstrated in $\mathcal{B}$ if it is demonstrated in $\mathcal{A}$.
\end{theorem}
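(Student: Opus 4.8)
The plan is to derive both the star-network and the general noncyclic-network inequality criteria from a common template, then compare the two thresholds term by term. First I would recall from Theorem \ref{th1} that the noisy star-network quantity has the closed form
\begin{align*}
\mathcal{S}_{\rm star}^{\rm noisy}=\bigl(\Pi_{i=1}^n\mu_i\beta_i\bigr)^{1/n}\sqrt{\bigl(\Pi_{i=1}^n t_1^{A_i}\bigr)^{1/n}+\bigl(\Pi_{i=1}^n t_2^{A_i}\bigr)^{1/n}},
\end{align*}
and then, following the same Bell-state-measurement expansion used in the proof of Theorem \ref{th1} but now applied to the layered structure of an arbitrary noncyclic network (invoking the inequality criterion of \cite{15}), obtain the analogous expression for $\mathcal{S}_{\rm noncyclic}^{\rm noisy}$ in which the exponent $1/n$ is replaced everywhere by $1/p_n$, where $p_n$ is the number of sources feeding the last layer of the network. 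The star network is exactly the special case in which every source feeds a single common central node, so $p_n=n$, which is why it sits at one extreme.

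Next I would carry out the comparison. The two facts that matter are $p_n\le n$ (the last layer cannot contain more sources than the whole network) and $0\le t_k^{A_i}\le 1$ for $k=1,2$ (singular values of a correlation tensor of a two-qubit state are bounded by $1$), together with $0\le\mu_i\beta_i\le 1$. For any collection of numbers $c_i\in[0,1]$ and exponents $0<1/n\le 1/p_n$, one has $(\Pi_i c_i)^{1/p_n}\le(\Pi_i c_i)^{1/n}$ because raising a number in $[0,1]$ to a smaller power increases it. Applying this to $c_i=\mu_i\beta_i$, to $c_i=t_1^{A_i}$, and to $c_i=t_2^{A_i}$, and using monotonicity of $\sqrt{\cdot}$ and of the sum, gives $\mathcal{S}_{\rm noncyclic}^{\rm noisy}\le\mathcal{S}_{\rm star}^{\rm noisy}$, exactly the chain of inequalities displayed as Ineq.~(3.2) in the excerpt. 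Hence $\mathcal{S}_{\rm noncyclic}^{\rm noisy}>1$ forces $\mathcal{S}_{\rm star}^{\rm noisy}>1$, and since violation of the respective inequality is precisely the certificate of non-$n$-local correlations, the statement follows: whenever the correlation is demonstrated in $\mathcal{A}$ it is demonstrated in $\mathcal{B}$.

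I would be careful about two points of rigor. The first, and I expect the main obstacle, is justifying the noncyclic-network criterion with the replacement $n\mapsto p_n$: this is not a mere restatement of Theorem \ref{th1} but requires re-running the Bell-state-measurement argument on the layered graph and checking that the product over \emph{all} $n$ sources still appears under the root while only the sources in the final layer control the exponent — this is where the reference \cite{15} must be used and the hypothesis that $\mathcal{A}$ and $\mathcal{B}$ share the same source states and matched measurements is essential so that the same $t_1^{A_i},t_2^{A_i},\mu_i,\beta_i$ enter both expressions. The second, minor, point is that the comparison genuinely needs $t_1^{A_i}\le 1$ (not just $t_2^{A_i}\le t_1^{A_i}$); without the upper bound $1$, a larger exponent could decrease rather than increase the product, so I would state this bound explicitly as coming from the two-qubit normalization. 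With those two observations in place the theorem is immediate.
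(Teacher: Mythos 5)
Your proposal is correct and follows essentially the same route as the paper: it invokes the noncyclic-network criterion from \cite{15} with exponent $1/p_n$, notes $p_n\le n$ together with $\mu_i\beta_i,\,t_1^{A_i},\,t_2^{A_i}\in[0,1]$, and deduces $\mathcal{S}_{\rm noncyclic}^{\rm noisy}\le\mathcal{S}_{\rm star}^{\rm noisy}$, so a violation in $\mathcal{A}$ forces one in $\mathcal{B}$. Your explicit remark that the upper bound $t_1^{A_i}\le 1$ (and not merely the ordering of the singular values) is what makes the exponent comparison go in the right direction is exactly the point the paper also records.
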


By Theorem 2, the star network offer the most powerful persistency of noncyclic netwrok non-multi-local correlation. For instance, one can calculate that under the noisy level $(\mu=\beta=\nu,\gamma,\xi,\alpha,\delta)=(0.98,0.05,0.05,0.95,0.97)$, non-8-local correlation at most can be detected in the linear network, but the non-$n$-local correlation can always be detected for arbitrary $n$ in the star network.

Next we show the change patterns of the infinite persistency sets under different noises and comparison with the linear network case.

{\bf Case 1.} (single kind of consistent noises only) $\mu_i=\mu, \beta_i=\beta$, other noises vanish, namely, $\alpha_i=1,  \delta_i=1,  \xi_i^{amp}=0, \eta_i^{amp}=0, \xi_i^{ph}=0, \eta_i^{ph}=0$, $\forall i = 1, 2,..., n$.
The inequality
criterion (\ref{cnsv}) becomes $\sqrt{2}\mu\beta >1$. The corresponding infinite-persistency set $\Omega^{\rm noisy}(\mu, \beta)=\{\sqrt{2}\mu\beta>1\}$ is  as the blue region plotted in Fig. 2 (a). Comparing with the linear network case \cite{linearnoisy}, under the noisy level $(\mu, \beta)=(0.95, 0.95)$, the persistency source number is 12 at most in the linear network, and here $+\infty $ in the star network.

When only single kind of consistent state noises exists, $\alpha_i=\alpha,\delta_i=\delta$, $\forall i = 1, 2,..., n$, other noises vanish.   The corresponding infinite persistency set $\Omega^{\rm noisy}(\alpha, \delta)$ is plotted in Fig. 2 (b). One can calculate that under the state noisy level $(\alpha, \delta)=(0.91, 0.85)\in \Omega^{\rm noisy}(\alpha, \delta)$, only the linear network non-$3$-local correlation can be demonstrate.  Similarly, Fig. 2 (c) explores the region of the infinite persistency set $\Omega^{\rm noisy}(\gamma^{amp}, \xi^{amp})$. Here, under the noisy parameter pair $(0.25,0.27)\in \Omega^{\rm noisy}(\gamma^{amp}, \xi^{amp})$, a common calculation show the non-$2$-local correlation in the linear network can be detected.

When there is only consistent PD channel noises in the star network, namely,  $\gamma_i^{ph}=\gamma^{ph}$,  $\xi_i^{ph}=\xi^{ph}$, $\forall i = 1, 2,..., n$, other noisy parameters vanish. The inequality criterion \ref{cchannel2v} becomes $\sqrt{(1-\gamma^{ph})(1-\xi^{ph})+1}>1$, which always holds true for all parameters $\gamma^{ph}\in [0,1], \xi^{ph}\in [0,1]$. This says that the single phase damping channel noises can not decay the non-$n$-local correlations in the star network.

\begin{figure}\label{singlenoise}
\center
\subfigure []
{\includegraphics[width=3.9cm,height=2.9cm]{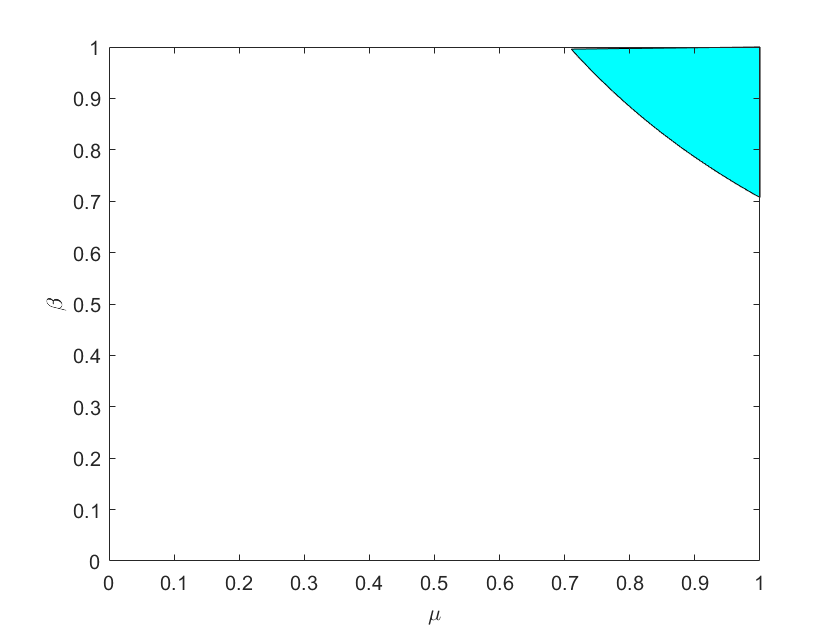}}
\subfigure []
{\includegraphics[width=3.9cm,height=2.9cm]{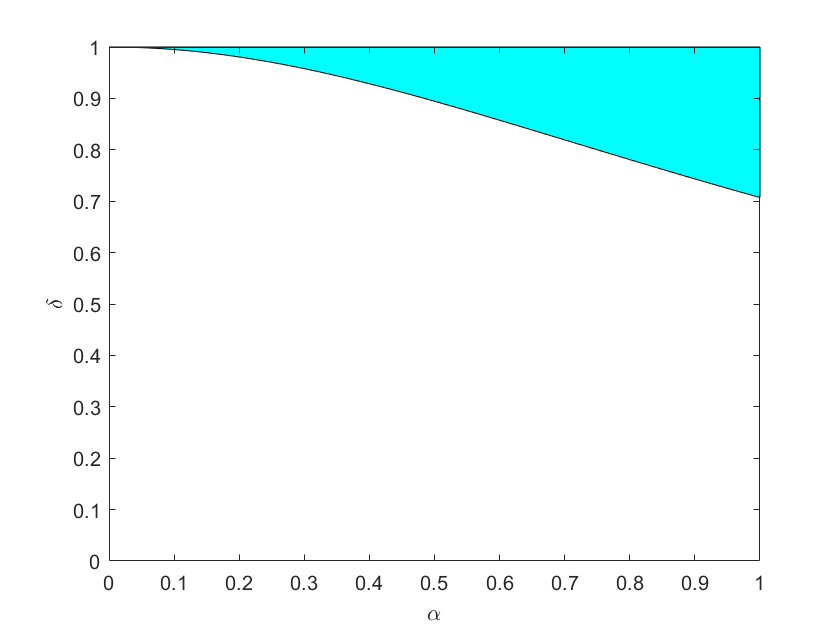}}\\
\subfigure []
{\includegraphics[width=3.9cm,height=2.9cm]{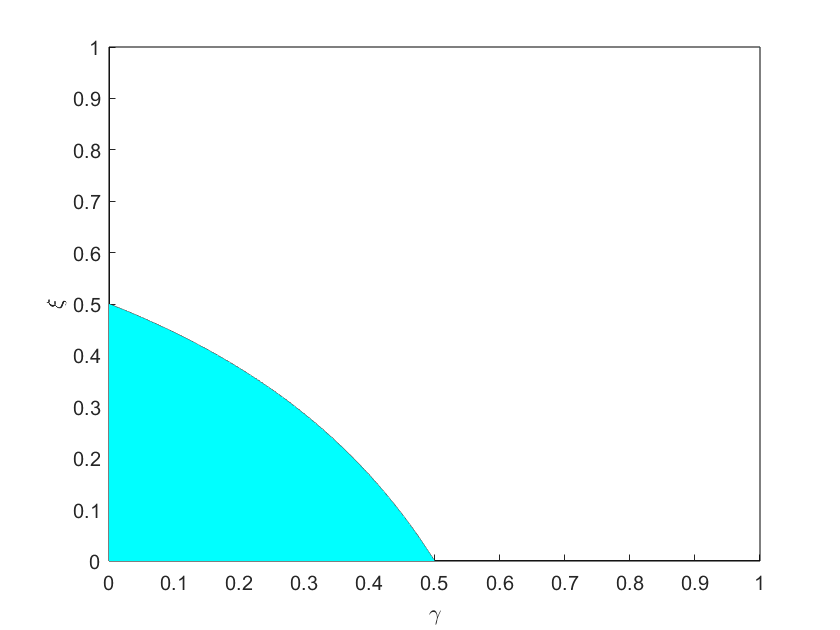}}
%\subfigure []
%{\includegraphics[width=3.9cm,height=2.9cm]%{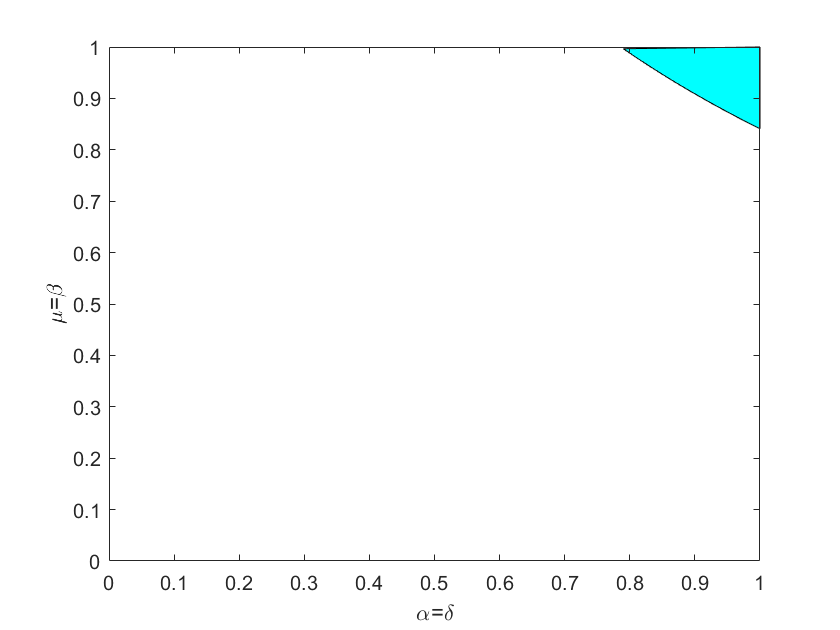}}
\caption{\quad (a) The region of $\Omega^{\rm noisy}(\mu, \beta)$. (b) The region $\Omega^{\rm noisy}(\alpha, \delta)$. (c)  The region $\Omega^{\rm noisy}(\gamma^{amp}, \xi^{amp})$.}
\centering \label{f4}
\end{figure}

Next we devote to analyzing the case of combinations of two different kinds of noise parameters at least.

{\bf Case 2.}  (Mixed consistent noises from entanglement generation and measurements) $\alpha_i=\alpha,\delta_i=\delta$ and $\mu_i=\mu,\beta_i=\beta$, $\forall i = 1, 2,..., n$, other noises vanish. For visibility, we further assume that $\alpha=\delta, \mu=\beta$. The inequality
criterion (\ref{cnsv}) becomes $\mu^2\alpha\sqrt{\alpha^2+1}>1$. The corresponding region $\Omega^{\rm noisy}(\alpha=\delta, \mu=\beta)$ is plotted as the blue region in Fig. 3 (a). One can calculate that under the noisy level $\alpha=\delta=0.95$ and $\mu=\beta=0.98$ in the $\Omega^{\rm noisy}(\alpha=\delta, \mu=\beta)$, the linear network non-7-local correlation at most can be demonstated.

Similarly, we can deal with the following four cases:
{\bf Case 3.} $\alpha_i=\alpha=\delta_i=\delta$, and $\gamma_i^{amp}=\gamma^{amp}=\xi_i^{amp}=\xi^{amp}$, $\forall i = 1, 2,..., n$, other noises vanish.
{\bf Case 4.}
$\alpha_i=\alpha=\delta_i=\delta$, and $\gamma_i^{ph}=\gamma^{ph}=\xi_i^{ph}=\xi^{ph}$, $\forall i = 1, 2,..., n$, other noises vanish.
{\bf Case 5.}
$\mu_i=\mu=\beta_i=\beta$,  and $\gamma_i^{amp}=\gamma^{amp}=\xi_i^{amp}=\xi^{amp}$,  $\forall i = 1, 2,..., n$, other noises vanish.
{\bf Case 6.}
$\mu_i=\mu=\beta_i=\beta$,  and $\gamma_i^{ph}=\gamma^{ph}=\xi_i^{ph}=\xi^{ph}$, $\forall i = 1, 2,..., n$, other noises vanish. The corresponding infinite-persistency set in the above four cases are plotted in Fig. 3 (b-e), respectively. Comparing the regions in Fig. 3 to those in Fig. 2, roughly speaking, the infinite-persistency regions under mixed noises becomes smaller than those under the single kind of noises. This means many kinds of noises make the stronger decay for non-$n$-local correlation. From Fig. 3 (c) and (e), although single  PD channels can not make the decay of star network non-$n$-local correlations, they can work by combining with other kinds of noises together. Moreover, compared Fig. 3 (b)/(d) to (c)/(e), we can conclude that in the mixed noise case, the amplitude-damping channels can make the corresponding infinite-persistency regions more smaller than phase damping channels.

\begin{figure}\label{twomixednoise}
\center
\subfigure []
{\includegraphics[width=3.9cm,height=2.9cm]{mix_noise_case_one_state_measure.png}}
\subfigure []
{\includegraphics[width=3.9cm,height=2.9cm]{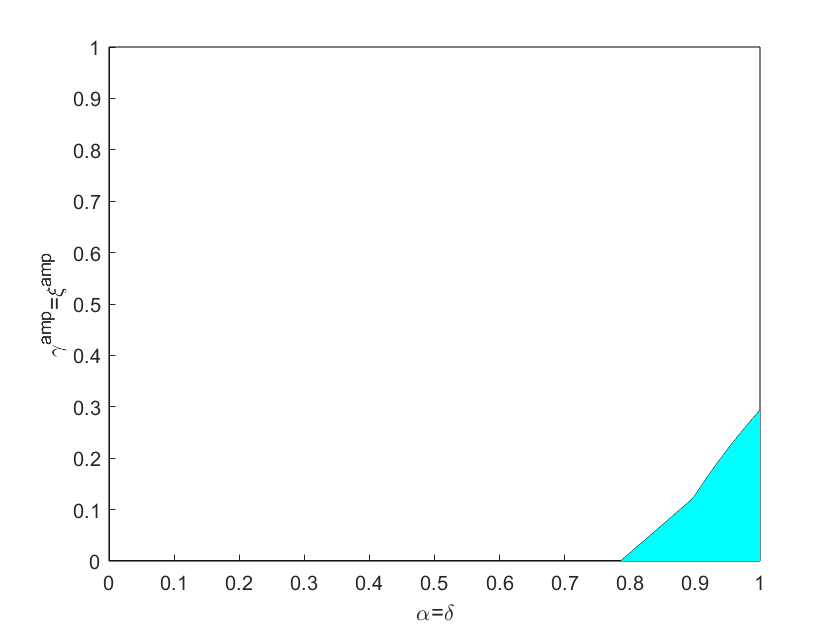}}\\
\subfigure []
{\includegraphics[width=3.9cm,height=2.9cm]{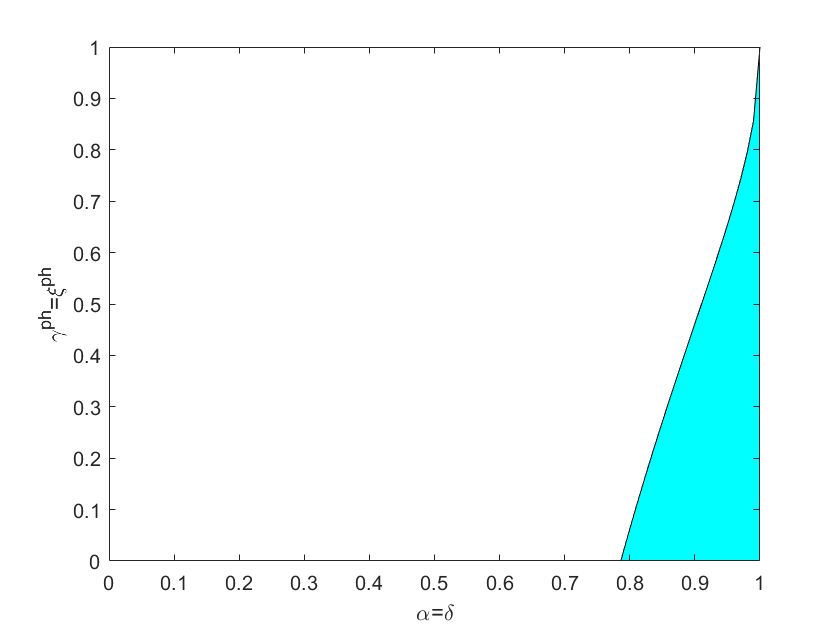}}
\subfigure []
{\includegraphics[width=3.9cm,height=2.9cm]{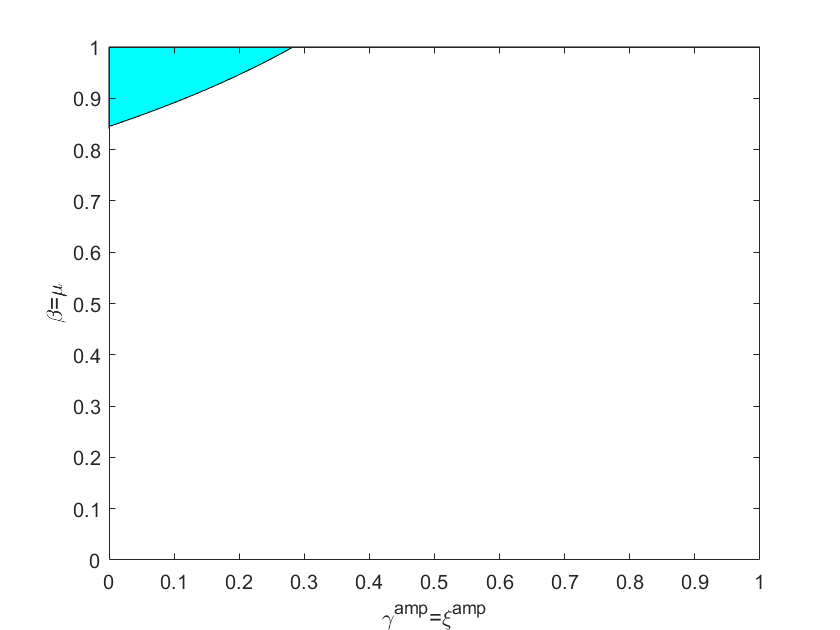}}
\subfigure []
{\includegraphics[width=3.9cm,height=2.9cm]{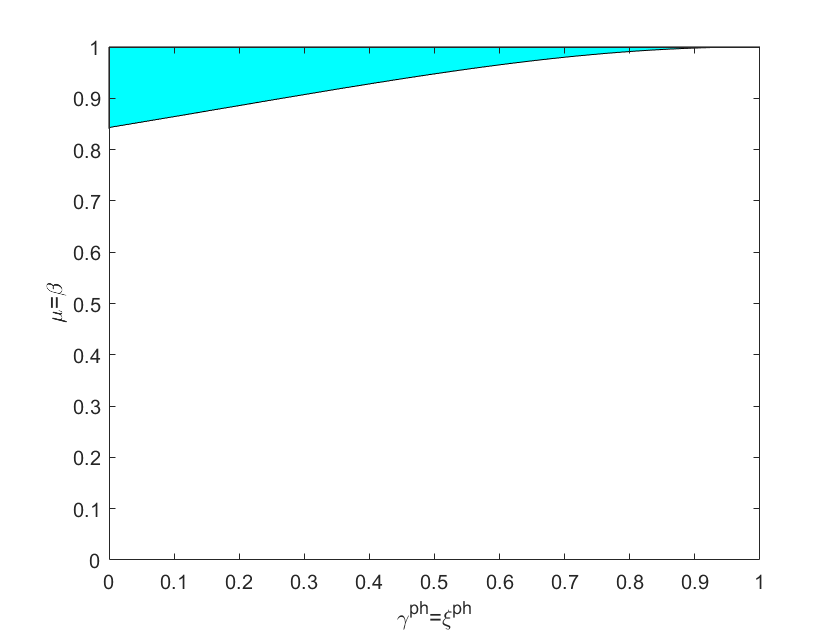}}
\caption{\quad (a) The region of $\Omega^{\rm noisy}(\alpha=\delta, \mu=\beta)$. (b) The region $\Omega^{\rm noisy}(\alpha= \delta, \gamma^{amp}=\xi^{amp})$. (c)  The region $\Omega^{\rm noisy}(\alpha=\delta,\gamma^{ph}= \xi^{ph})$. (d) The region $\Omega^{\rm noisy}(\gamma^{amp}=\xi^{amp},\mu=\beta)$. (e) The region $\Omega^{\rm noisy}(\gamma^{ph}=\xi^{ph},\mu=\beta)$.}
\centering \label{f4}
\end{figure}

{\bf Case 7.}  (Mixture of all consistent noises)
$\alpha_i=\alpha=\delta_i=\delta$, $\mu_i=\mu=\beta_i=\beta$,  and $\gamma_i^{ph}=\gamma^{ph}=\xi_i^{ph}=\xi^{ph},\gamma_i^{amp}=\gamma^{amp}=\xi_i^{amp}=\xi^{amp}$, $\forall i = 1, 2,..., n$. The inequality criterion  (\ref{cchannel1v}) becomes $\beta\mu ({\rm Max}(2\alpha^2 \delta^2 D^{amp},\alpha^2 \delta^2D^{amp}+(\delta D^{amp}+\gamma^{amp}\xi^{amp})^2))^{\frac{1}{2}}>1$ where $D^{amp}=(1-\gamma^{amp})(1-\xi^{amp})$.   The corresponding infinite-persistency set $\Omega^{\rm noisy}(\alpha=\delta, \gamma^{amp}=\xi^{amp},\mu=\beta)$ is plotted as the colored 3D area in Fig. 4 (a). Under the noisy parameter selection in  $\alpha=\delta=0.95, \gamma^{amp}=\xi^{amp}=0.08,\mu=\beta=0.97$, which is in the infinite persistency set $\Omega^{\rm noisy}(\alpha=\delta, \gamma^{amp}=\xi^{amp},\mu=\beta)$, the linear network non-$7$-local correlation can be detected at most.
If amplitude damping channels are replaced by phase camping ones.   The corresponding set $\Omega^{\rm noisy}(\alpha=\delta, \gamma^{ph}=\xi^{ph},\mu=\beta)$ is plotted in Fig. 4 (b).

\begin{figure}\label{twomixednoise}
\center
\subfigure []
{\includegraphics[width=3.9cm,height=2.9cm]{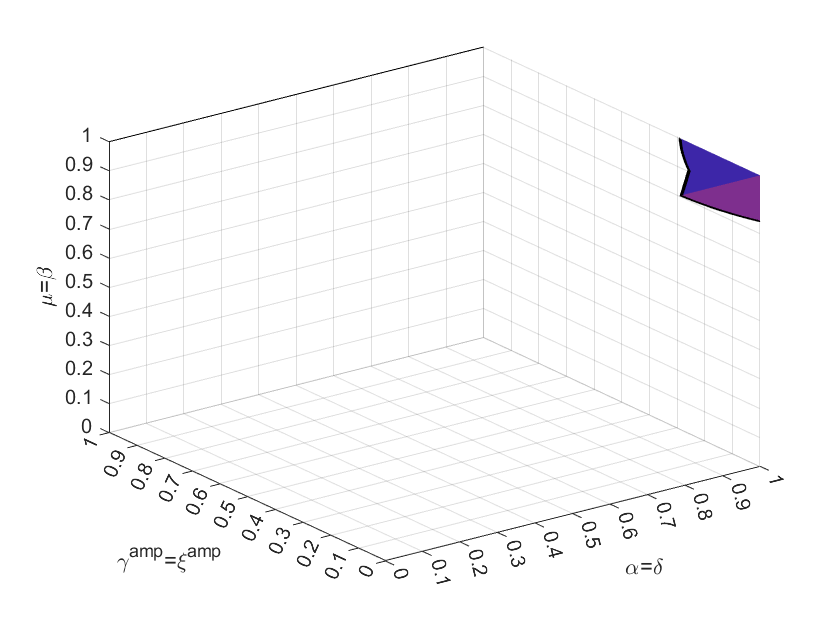}}
\subfigure []
{\includegraphics[width=3.9cm,height=2.9cm]{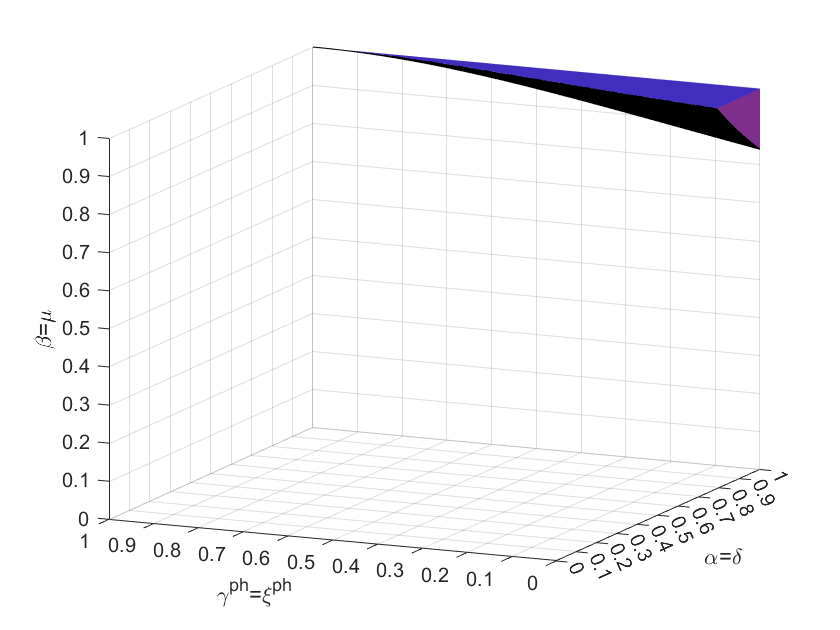}}
\caption{\quad (a) The region of $\Omega^{\rm noisy}(\alpha=\delta, \gamma^{amp}=\xi^{amp},\mu=\beta)$. (b) The region $\Omega^{\rm noisy}(\alpha=\delta, \gamma^{ph}=\xi^{ph},\mu=\beta)$.}
\centering \label{f4}
\end{figure}

\section{Persistency under partially consistent noises}\label{}

In the above section, the infinite persistency of star network non-$n$-local correlations in consistent noisy parameter regions is explored.  However, it still unknown that how the persistency is in the exterior of these regions. For this, here we introduce the topic of partially consistent noises to explore the change pattern of
 the maximal number of sources $n_{\rm max}$ such that non-$n$-local correlation can be demonstrated for arbitrary $n\leq n_{\rm max}$ in the star network. We call the kind of noises with parameters $\alpha_i$ ($i=1,2,...,n$) {\it partially consistent} if there exists a positive integer $k\leq n$, and constants $\alpha$, $\alpha'$ such that $\alpha_i=\alpha$ for all $i= 1,2,...,k$ and $\alpha_j=\alpha'$ for all $j=k+ 1,k+ 2,...,n$. We still divide discussion to the two cases of single and mixed noises. Here it is mentioned that we do not explore all cases for readability, and one can deal with other cases similarly.

\subsection{The case of single noise}

{\bf Case 1.} (state noises only) For $k\leq n$, $\alpha_i=\alpha$ and $\delta_i=\delta$ for  all $i=  1,2,...,k$,  and $\alpha_j=\alpha'$ and $\delta_j=\delta'$ for  all $j=k+ 1,k+ 2,...,n$, other noises vanish. The inequality criterion (\ref{nsv}) becomes
\begin{align}\label{sta1}
 \delta^{\frac{k}{n}} {\delta'}^{\frac{n-k}{n}} \sqrt{\alpha^{\frac{2k}{n}}{\alpha'}^{\frac{2n-2k}{n}}+1 }>1
\end{align}
For visibility, we take $k=1,\alpha=\sqrt{\alpha'},\delta=\sqrt{\delta'}$.
 (\ref{sta1}) becomes
\begin{align}\label{sta11}
{\delta'}^{1-\frac{1}{2n}}\sqrt{{\alpha'}^{2-\frac{1}{n}}+1 }>1
\end{align}
In the case, $n_{\rm max}$ increases like a staircase as $\alpha'$ and $\delta'$  move from 0 to 1 (see Fig. 5 (a)). A common calculation shows that when the noisy parameter pair lies in the exterior of the region $\Omega^{\rm noisy}(\alpha, \delta)$ plotted in Fig. 2 (b), we can obtain the value of $n_{\rm max}$.   For example, when $(\alpha', \delta')=(0.95, 0.7)$, $n_{\rm max}=5$.

\iffalse In the region of noisy parameters $\Omega^{\rm noisy}(\alpha', \delta')=\{(\alpha', \delta'):{\delta'}^{1-\frac{1}{2n}}\sqrt{{\alpha'}^{2-\frac{1}{n}}+1 }>1\}$,  star network non-$n$-local correlation can be demonstrated for arbitrary $n$.
\fi

{\bf Case 2.} (measurement noises only) For $k\leq n$, $\mu_i=\mu$ and $\beta_i=\beta$ for  all $i=  1,2,...,k$,  and $\mu_j=\mu'$ and $\beta_j=\beta'$ for  all $j=k+ 1,k+ 2,...,n$, other noises vanish. (\ref{nsv}) becomes
\begin{align}\label{sta2}
\sqrt{2}\beta^{\frac{k}{n}}{\beta'}^{\frac{n-k}{n}}\mu^{\frac{k}{n}}{\mu'}^{\frac{n-k}{n}}>1
\end{align}
For visibility, we take $k=1,\mu=\sqrt{\mu'},\beta=\sqrt{\beta'}$. (\ref{sta2}) becomes
\begin{align}\label{sta22}
\sqrt{2}{\beta'}^{1-\frac{1}{2n}}{\mu'}^{1-\frac{1}{2n}}>1
\end{align}
Denote by $n_{max}$   the maximal number of $n$ such that (\ref{sta22}) holds true.  $n_{\rm max}$ also increases like a staircase as $\alpha'$ and $\delta'$  move from 0 to 1 (see Fig. 5 (b)).
Similarly when noisy parameters are fixed, for example $(\beta', \mu')=(0.7, 0.95)$, we can calculate the value $n_{\rm max}=3$.

{\bf Case 3.} (PD channel noises) For $k\leq n$, $\gamma^{ph}_i=\gamma$ and $\xi^{ph}_i=\xi$ for  all $i=  1,2,...,k$,  and $\gamma^{ph}_j=\gamma'$ and $\xi^{ph}_j=\xi'$ for  all $j=k+ 1,k+ 2,...,n$, other noises  vanish. (\ref{channel2v}) becomes
\begin{align}\label{sta4}
\sqrt{\left[(1-\gamma)(1-\xi)\right]^{\frac{k}{n}}\left[(1-\gamma')(1-\xi')\right]^{\frac{n-k}{n}}+1}>1
\end{align}
One can see that (\ref{sta4}) always holds true. This says that the same to  the consistent noise case, the non-$n$-local correlations can persist for all $n$ under partially consistent phase damping channels.

\begin{figure}\label{singlenoise}
\center
\subfigure []
{\includegraphics[width=3.9cm,height=2.9cm]{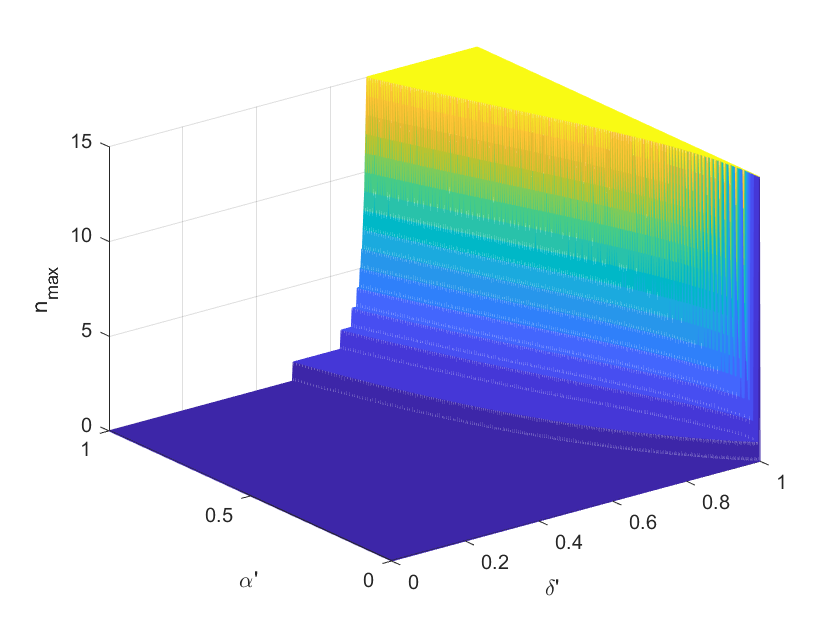}}
\subfigure []
{\includegraphics[width=3.9cm,height=2.9cm]{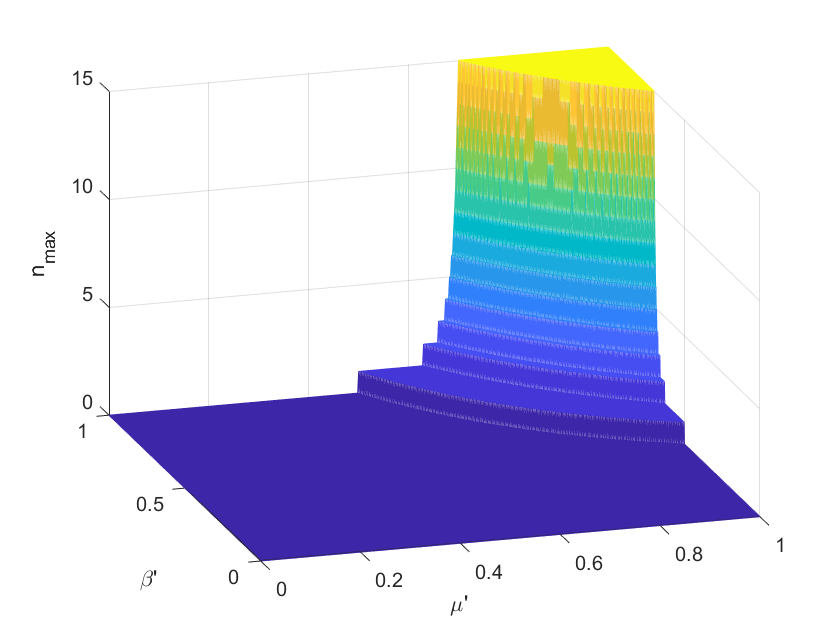}}
\caption{\quad (a) The change pattern of $n_{\rm max}$ under the influence of state noises. (b)  The change pattern of $n_{\rm max}$ under the influence of measurement noises.}
\centering \label{f4}
\end{figure}

\subsection{The case of mixed noises}

{\bf Case 4.} (PD channel and state noises) For $k\leq n$, $\alpha_i=\delta_i=\alpha$ and $\gamma^{ph}_i=\xi^{ph}_i=\gamma$ for  all $i=  1,2,...,k$,  and $\alpha_j=\delta_j=\alpha'$ and $\gamma^{ph}_j=\xi^{ph}_j=\gamma'$ for  all $j=k+ 1,k+ 2,...,n$, other noises vanish. (\ref{channel2v}) becomes
\begin{align}\label{sta7}
\sqrt{\alpha^{\frac{4k}{n}}{\alpha'}^{\frac{4n-4k}{n}}(1-\gamma)^{\frac{2k}{n}}(1-\gamma')^{\frac{2n-2k}{n}}+\alpha^{\frac{2k}{n}}{\alpha'}^{\frac{2n-2k}{n}}}>1
\end{align}
Take $k=1,\alpha=\sqrt{\alpha'},1-\gamma=\sqrt{1-\gamma'}$.
 (\ref{sta7}) becomes
\begin{align}\label{sta77}
\sqrt{{\alpha'}^{4-\frac{2}{n}}(1-\gamma')^{2-\frac{1}{n}}+{\alpha'}^{2-\frac{1}{n}}}>1
\end{align}
We define $n_{max}$ as the maximal number of $n$ such that (\ref{sta77}) holds true. $n_{max}$ increases like a staircase as $\gamma'$ moves from 0 to 1 and $\alpha'$ moves from 1 to 0 (see Fig. 6 (a)).

{\bf Case 5.} (PD channel and measure noises)  For $k\leq n$, $\mu_i=\beta_i=\mu$ and $\gamma_i^{ph}=\xi_i^{ph}=\xi$ for  all $i=  1,2,...,k$,  and $\mu_j=\beta_j=\mu'$ and $\gamma_j^{ph}=\xi_j^{ph}=\xi'$ for  all $j=k+ 1,k+ 2,...,n$, other noises vanish. (\ref{channel2v}) becomes
\begin{align}\label{sta9}
\mu^{\frac{2k}{n}}{\mu'}^{\frac{2n-2k}{n}}\sqrt{(1-\xi)^{\frac{2k}{n}}(1-\xi')^{\frac{2n-2k}{n}}+1}>1
\end{align}
For visibility, we take $k=1,\mu=\sqrt{\mu'},1-\xi=\sqrt{1-\xi'}$.
 (\ref{sta9}) becomes
\begin{align}\label{sta99}
{\mu'}^{2-\frac{1}{n}}\sqrt{(1-\xi')^{2-\frac{1}{n}}+1}>1
\end{align}
$n_{\rm max}$'s change pattern is plotted in Fig. 6 (b).
\begin{figure}\label{twomixednoise}
\center
\subfigure []
{\includegraphics[width=3.9cm,height=2.9cm]{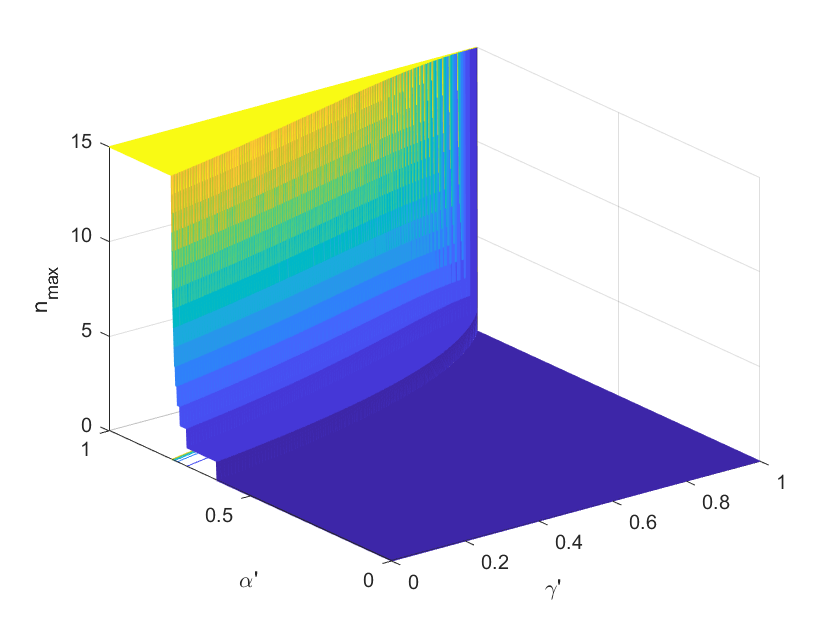}}
\subfigure []
{\includegraphics[width=3.9cm,height=2.9cm]{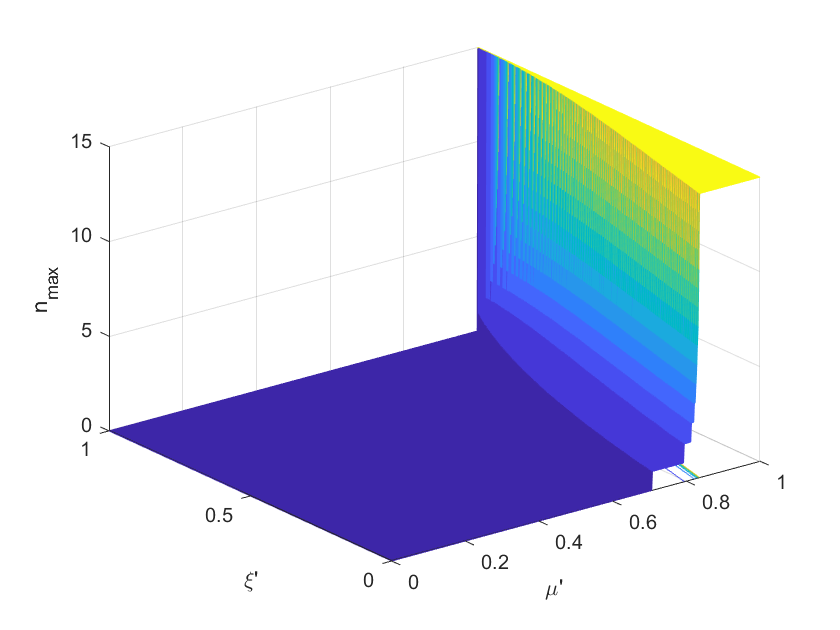}}
\caption{\quad (a)  The change pattern of $n_{\rm max}$ under PD channel and state noises. (b) The change pattern of $n_{max}$ under PD channel and measure noises.}
\centering \label{f4}
\end{figure}

Finally, we list some special values of the source number $n$ under partial consistent noises. It means that the persistency degree can always be calculated when the partially consistent noisy parameters are fixed. While we compare the difference of  persistency source number (PSN) between linear and star network non-multi-local correlations.

\begin{table*}[htbp]
    \centering
    \caption{Persistency of non-$n$-locality in star-networks when $k=1$}
    \label{tab}
    \begin{tabular}{cccc}
       \hline\hline\noalign{\smallskip}
        \textbf{Type of noise} & \textbf{Noise parameters} & \textbf{Star PSN} & \textbf{Linear PSN} \\
        \noalign{\smallskip}\hline\noalign{\smallskip}
        state noises only & $(\alpha',\delta')=(0.95,0.7)$ & 5 & 2 \\
        measurement noises only & $(\beta',\mu')=(0.83,0.83)$ & 7 & 3\\
        PD channel and state noises & $(\alpha',\gamma')=(0.83,0.3)$ & 4 & 2\\
        PD channel and measure noises & $(\mu',\xi')=(0.9,0.35)$ & 4 & 2\\
     \noalign{\smallskip}\hline\hline
    \end{tabular}
\end{table*}

\section{\textbf{Conclusions and discussions}} \label{sec5}

Here we discover that the star network non-$n$-local correlations can offer a  superiority on resisting consistency noises, where they can persist for arbitrary large $n$ in partial or even global regions of consistency noise parameters. Not only that, we propose the topic of partially consistent noises to explore exhaustively the change pattern of the maximal number of $n$ that non-$n$-local correlation can persist under various kinds of noises. Based on our  observations, we conclude that as a kind of noncyclic networks,  the star network can offer the ``strongest" non-$n$-local correlation among noncyclic networks, including linear networks so on.
This suggest that star-type structures should be more available in building quantum networks.

The papre is focused on the noncyclic network case. The challenging further work is to dealt with the case of arbitrary networks with cycles.
Moreover, the several kinds of noises are from independent sources, measurements and channels, which is far from sufficient from experimental perspectives.
It would be more interesting to analyze more errors from physical implementation.

\section*{Acknowledgement}
This work is supported by the National Natural Science Foundation of China under Grants No.12271394, the Key Research and Development Program of Shanxi Province under Grant No.202102010101004.

\section{Appendix}

\begin{widetext}
{\rm Proof of Theorem \ref{th1}}.

According to assumption, a common calculation show that
\begin{align*}
 \langle A_{k_i}^i \otimes B_{0}^i \rangle _{\rho}^{noisy}
&=\sum\limits_{k_i = 0}^1 \sum\limits_{j = 0}^1  \sum\limits_{h = 0}^1 (-1)^{j+h} Tr\left [(P_{i,k_i,j}^{noisy}\otimes M_{i,0,h}^{noisy} )\rho \right ]\\
&=\sum\limits_{k_i = 0}^1  \sum\limits_{h = 0}^1 (-1)^{h} Tr\left\{ \left[(P_{i,k_i,0}^{noisy}-P_{i,k_i,1}^{noisy})\otimes M_{i,0,h}^{noisy} \right]\rho \right\}\\
&=\mu_i \sum\limits_{k_i = 0}^1  \sum\limits_{h = 0}^1 (-1)^{h} Tr\left\{ \left[(P_{i,k_i,0}^{ideal}-P_{i,k_i,1}^{ideal})\otimes M_{i,0,h}^{noisy} \right]\rho \right\}\\
&=\mu_i \sum\limits_{k_i = 0}^1 \sum\limits_{j = 0}^1  \sum\limits_{h = 0}^1 (-1)^{j+h} Tr\left [(P_{i,k_i,j}^{ideal}\otimes M_{i,0,h}^{noisy} )\rho \right ]\\
&=\mu_i \beta_i \sum\limits_{k_i = 0}^1 \sum\limits_{j = 0}^1  \sum\limits_{h = 0}^1 (-1)^{j+h} Tr\left [(P_{i,k_i,j}^{ideal}\otimes M_{i,0,h}^{ideal} )\rho \right ]\\
&=\mu_i \beta_i \langle A_{k_i}^i \otimes B_{0}^i \rangle _{\rho}^{ideal}
\end{align*}

The similar calculations follows that
\begin{align*}
\mathcal{S}_{\rm star}^{\rm noisy}
&=|I|^{1/n}+|J|^{1/n}\\
&=\left | \frac{1}{2}\sum\limits_{{x_1}...{x_n}} {\left\langle A^{1}_{x_1} ...A^{n}_{x_n}B_0 \right\rangle ^{noisy} }\right|^{1/n}+\left|\frac{1}{2}\sum\limits_{{x_1}...{x_n}}(-1)^{\sum\nolimits_i x_i} {\left\langle A^{1}_{x_1} ...A^{n}_{x_n}B_1 \right\rangle ^{noisy}}\right|^{1/n}\\
&=\left|\prod\limits_{i = 1}^n  \frac{1}{2}(\langle A_0^i B_0^i\rangle^{noisy}+\langle A_1^i B_0^i\rangle^{noisy})\right|^{1/n}+\left|\prod\limits_{i = 1}^n \frac{1}{2}(\langle A_0^i B_1^i\rangle^{noisy}-\langle A_1^i B_1^i\rangle^{noisy})\right|^{1/n}\\
&=\left| \prod\limits_{i = 1}^n  \frac{1}{2} \mu_i \beta_i (\vec a^i_0+\vec a^i_1) T_{\rho_{A_iB}} \vec b^i_0\right|^{1/n}+\left| \prod\limits_{i = 1}^n  \frac{1}{2} \mu_i \beta_i (\vec a^i_0 - \vec a^i_1) T_{\rho_{A_iB}} \vec b^i_1 \right|^{1/n}\\
\end{align*}
Analogous to  the proof Theorem 3 in \cite{10}, we have that the maximal value of $\mathcal{S}_{\rm star}^{\rm noisy}$ equals to
\begin{align*}
(\Pi_{i=1}^n\mu_i\beta_i)^{1/n}\sqrt{(\Pi_{i=1}^nt_1^{A_i})^{1/n} +(\Pi_{i=1}^nt_2^{A_i})^{1/n} }
\end{align*}
Where $t_1^{A_i}$and $t_2^{A_i}$are the two greatest eigenvalues of the matrix $T_{\varrho_{{A_i} B}}^T T_{\varrho_{{A_i} B}}$ with $t_1^{A_i} \geq t_2^{A_i}$.

\end{widetext}

\end{document}